\newcommand{\tup}[1]{\langle #1 \rangle}
\newcommand{\cA}{\mathcal{A}}
\newcommand{\cB}{\mathcal{B}}
\newcommand{\cC}{\mathcal{C}}
\newcommand{\cD}{\mathcal{D}}
\newcommand{\cN}{\mathcal{N}}
\newcommand{\jL}{\mathfrak{J}}
\newcommand{\jfinL}{\mathfrak{J}_{\text{fin}}}
\newcommand{\regL}{\mathfrak{L}}
\newcommand{\kwinL}[1][k]{\mathfrak{J}_{#1\boxplus}}
\newcommand{\ewinL}{\mathfrak{J}_{\exists\boxplus}}
\newcommand{\finL}{\regL_{\text{fin}}}
\newcommand{\bbN}{\mathbb{N}}
\newcommand{\bbM}{\mathbb{M}}
\newcommand{\lin}{\mathrm{Lin}}
\newcommand{\conp}{\texttt{coNP}\xspace}
\newcommand{\nlc}{\texttt{NL-Complete}\xspace}
\newcommand{\DFA}{\mbox{\rm DFA}\xspace}
\newcommand{\Inf}{\mathrm{Inf}}
\newcommand{\parikh}{\Psi}
\newcommand{\numl}[2]{\#_{#2}[#1]}
\newcommand{\simkw}[1][k]{\sim_{k\boxplus}}
\newcommand{\norm}[1]{\|#1\|}
\newcommand{\mask}{\mathfrak{m}}
\newcommand{\maskSetExplicit}{\{0,\infty\}^\Sigma\setminus \{\vec{0}\}}
\newcommand{\maskSet}{\twemoji{performing arts}}
\newcommand{\bbNinf}{\bbM^\Sigma}
\spnewtheorem{complexitya}[theorem]{Complexity Analysis}{\itshape}{\rmfamily}
\crefname{complexitya}{Complexity Analysis}{Complexity Analyses}
\begin{document}
\title{Jumping Automata over Infinite Words\thanks{This research was supported by the ISRAEL SCIENCE FOUNDATION (grant No. 989/22)}}
%
%
\author{Shaull Almagor\inst{1}\orcidID{0000-0001-9021-1175} \and
Omer Yizhaq\inst{2}}
\authorrunning{S. Almagor and O. Yizhaq}
%
\institute{Technion, Israel \email{shaull@technion.ac.il} \and Technion, Israel \email{omeryi@campus.technion.ac.il }}
%
\maketitle              
\begin{abstract}
Jumping automata are finite automata that read their input in a non-consecutive manner, disregarding the order of the letters in the word. We introduce and study jumping automata over infinite words. Unlike the setting of finite words, which has been well studied, for infinite words it is not clear how words can be reordered.
To this end, we consider three semantics: automata that read the infinite word in some order so that no letter is overlooked, automata that can permute the word in windows of a given size k, and automata that can permute the word in windows of an existentially-quantified bound.
We study expressiveness, closure properties and algorithmic properties of these models.

\keywords{Jumping Automata \and Parikh Image \and Infinite Words}
\end{abstract}
\section{Introduction}
\label{sec:intro}
Traditional automata read their input sequentially. Indeed, this is the case for most state-based computational models.
In some settings, however, the order of the input does not matter. For example, when the input represents available resources, and we only wish to reason about their \emph{quantity}. From a more language-theoretic perspective, this amounts to looking at the \emph{commutative closure} of the languages, a.k.a. their \emph{Parikh image}. 
To capture this notion in a computation model, \emph{Jumping Automata} were introduced in~\cite{meduna2012jumping}. A jumping automaton may read its input in a non-sequential manner, jumping from letter to letter, as long as every letter is read exactly once. Several works have studied the algorithmic properties and expressive power of these automata~\cite{fernau2015jumping,fernau2017characterization,vorel2018basic,fazekas2021two,lavado2014operational}. 

One of the most exciting developments in automata and language theory has been the extension to the setting of infinite words~\cite{buchi1966symposium,mcnaughton1966testing,kupferman2018automata}, which has led to powerful tools in formal methods. The infinite-word setting is far more complicated than that of finite words, involving several acceptance conditions and intricate expressiveness relationships. Most notably perhaps, nondeterministic B\"uchi automata cannot be determinized, but can be determinized to Rabin automata~\cite{landweber1968decision,mcnaughton1966testing,safra1988complexity}.

In this work, we introduce jumping automata over infinite words. The first challenge is to find meaningful definitions for this model. Indeed, the intuition of having the reading head ``jump'' to points in the word is ill-suited for infinite words, since one can construct an infinite run even without reading every letter (possibly even skipping infinitely many letters) (see \cref{rmk:jumping_semantics}). 

To this end, we propose three semantics for modes of jumping over infinite words for an automaton $\cA$.
\begin{itemize}
    \item In the \emph{jumping} semantics, a word $w$ is accepted if $\cA$ accepts a permutation of it, i.e., a word $w'$ that has the same number of occurrences of each letter as $w$ for letters occurring finitely often, and the same set of letters that occur infinitely often.  
    \item In the \emph{$k$-window jumping} semantics, $w$ is accepted if we can make $\cA$ accept it by permuting $w$ within contiguous windows of some fixed size $k$.
    \item In the \emph{$\exists$-window jumping} semantics, $w$ is accepted if there exists some $k$ such that we can make $\cA$ accept it by permuting $w$ within contiguous windows of size $k$.
\end{itemize}
\begin{example}
    \label{xmp:semantics}
    Consider a B\"uchi automaton for the language $\{(ab)^\omega\}$ (see~\cref{fig:intersection_NBW_fails}). Its jumping language is $\{w: w$ has infinitely many $a$'s and $b$'s $\}$. Its $3$-window jumping language, for example, consists of words whose $a$'s and $b$'s can be rearranged to construct $(ab)^\omega$ in windows of size $3$, such as $(aab\cdot bba)^\omega$. As for its $\exists$-window jumping language, it contains e.g., the word $(aaaabbbb)^\omega$, which is not in the 3-window language, but does not contain $aba^2b^2a^3b^3\cdots $, which is in the jumping language.
\end{example}
The definitions above capture different intuitive meanings for jumping: the first definition only looks at the Parikh image of the word, and corresponds to e.g., a setting where a word describes resources, with some resources being unbounded. The second (more restrictive) definition captures a setting where the word corresponds to e.g., a sequence of tasks to be handled, and we are allowed some bounded freedom in reordering them, and the third corresponds to a setting where we may introduce any finite delay to a task, but the overall delays we introduce must be bounded.

We study the expressiveness of these semantics, as well as closure properties and decision problems. Specifically, we show that languages in the jumping semantics are closed under union, intersection and complement. Surprisingly, we also show that automata admit determinization in this semantics, in contrast with standard B\"uchi automata. We further show that the complexity of decision problems on these automata coincide with their finite-word jumping counterparts. Technically, these results are attained by augmenting semilinear sets to accommodate $\infty$, and by showing that jumping languages can be described in a canonical form with such sets.

In the $k$-window semantics, we show a correspondence with $\omega$-regular languages, from which we deduce closure properties and solutions for decision problems. Finally, we show that the $\exists$-window semantics is strictly more expressive than the jumping semantics. 


\subsubsection*{Paper organization} In \cref{sec:prelim} we recap some basic notions and define our jumping semantics. \cref{sec:JBW} is the bulk of the paper, where we study the jumping semantics. In \cref{subsec:JBA and semilinear} we introduce our variant of semilinear sets, and prove that it admits a canonical form and that it characterizes the jumping languages. Then, in \cref{subsec:closure_JBA,subsec:algorithms_JBA} we study closure properties and decision problems for jumping languages. In \cref{sec:fixed window,sec:exists_jumping} we study the $k$-window and $\exists$-window semantics, respectively. Finally, we conclude with future research in \cref{sec:future}. 
Throughout the paper, our focus is on clean constructions and decidability. We thus defer complexity analyses to notes that follow each claim. 

\subsubsection*{Related work}
Jumping automata were introduced in~\cite{meduna2012jumping}. We remark that~\cite{meduna2012jumping} contains some erroneous proofs (e.g., closure under intersection and complement, also pointed out in~\cite{fernau2017characterization}). The works in~\cite{fernau2015jumping,fernau2017characterization} establish several expressiveness results on jumping automata, as well as some complexity results. In~\cite{vorel2018basic} many additional closure properties are established. An extension of jumping automata with a two-way tape was studied in~\cite{fazekas2021two}.

Technically, since jumping automata correspond to the semilinear Parikh image/commutative closure~\cite{parikh1966context}, tools on semilinear sets are closely related. The works in~\cite{beierdescriptional,chistikov2016taming,to2010parikh} provide algorithmic results on semilinear sets, which we use extensively, and in~\cite{lavado2014operational} properties of semilinear sets are related to the state complexity of automata.

More broadly, automata and commutative closure also intersect in Parikh Automata~\cite{klaedtke2003monadic,cadilhac2012bounded,cadilhac2012affine,guha2022parikh}, which read their input and accept if a certain Parikh image relating to the run belongs to a given semilinear set. In particular,~\cite{guha2022parikh} studies an extension of these automata to infinite words.
Another related model is that of symmetric transducers -- automata equipped with outputs, such that permutations in the input correspond to permutations in the output. These were studied in~\cite{almagor2020process} in a jumping-flavour, and in~\cite{nassar2022simulation} in a $k$-window flavour. 

\section{Preliminaries and Definitions}
\label{sec:prelim}
\subsubsection{Automata}
A \emph{nondeterministic automaton} is a 5-tuple $\cA=\tup{\Sigma,Q,\delta,Q_0,\alpha}$ where $\Sigma$ is a finite alphabet, $Q$ is a finite set of states, $\delta:Q\times\Sigma\to 2^Q$ is a nondeterministic transition function, $Q_0\subseteq Q$ is a set of initial states, and $\alpha\subseteq Q$ is a set of accepting states. When $|Q_0|=1$ and $|\delta(q,\sigma)|= 1$ for every $q\in Q$ and $\sigma\in \Sigma$, we say that $\cA$ is \emph{deterministic}. 

We consider automata both over finite and over infinite words. We denote by $\Sigma^*$ the set of finite words over $\Sigma$, and by $\Sigma^\omega$ the set of infinite words. For a word $w=\sigma_1,\sigma_2,\ldots$ (either finite or infinite), let $|w|\in \bbN\cup \{\infty\}$ be its length. A \emph{run} of $\cA$ on $w$ is a sequence $\rho=q_0,q_1,\ldots$ such that $q_0\in Q_0$ and for every $0\le i<|w|$ it holds that $q_{i+1}\in \delta(q_i,w_{i+1})$ (naturally, a run is finite if $w$ is finite, and infinite if $w$ is infinite).
For finite words, the run $\rho$ is \emph{accepting} if $q_{|w|}\in \alpha$. For infinite words we use the B\"uchi acceptance condition, whereby the run $\rho$ is accepting if it visits $\alpha$ infinitely often. Formally, define $\Inf(\rho)=\{q\in Q\mid \forall i\in \bbN\ \exists j>i,\ \rho_j=q\}$, then $\rho$ is accepting if $\Inf(\rho)\cap \alpha\neq \emptyset$. A word $w$ is accepted by $\cA$ if there exists an accepting run of $\cA$ on $w$.

The \emph{language} of an automaton $\cA$, denoted $\regL(\cA)$ is the set of words it accepts. We emphasize an automaton is over finite words by writing $\finL(\cA)$.

\begin{remark}
There are other acceptance conditions for automata over infinite words, e.g., parity, Rabin, Streett, co-B\"uchi and Muller. As we show in~\cref{prop:jumping_determinizable}, for our purposes it is enough to consider B\"uchi.
\end{remark}

\subsubsection{Parikh Images and Permutations} 
Fix an alphabet $\Sigma$. We start by defining Parikh images and permutations for finite words.
Consider a finite word $x\in \Sigma^*$. For each letter $\sigma\in \Sigma$ we denote by $\numl{x}{\sigma}$ the number of occurrences of $\sigma$ in $x$. The \emph{Parikh image} of $x$ is then the vector $\parikh(x)=(\numl{x}{\sigma})_{\sigma\in \Sigma}\in \bbN^\Sigma$. We say that a word $y$ is a \emph{permutation} of $x$ if $\parikh(y)=\parikh(x)$, in which case we write $x\sim y$ (clearly $\sim$ is an equivalence relation). We extend the Parikh image to sets of words by $\parikh(L)=\{\parikh(w)\mid w\in L\}$.

We now extend the definitions to infinite words.
Let $\bbN=\{0,1,\ldots\}$ be the non-negative integers, and write $\bbN_\infty=\bbN\cup\{\infty\}$. Consider an infinite word $w\in \Sigma^\omega$. We extend the definition of Parikh image to infinite words in the natural way: $\parikh(w)=(\numl{w}{\sigma})_{\sigma\in \Sigma}$ where $\numl{w}{\sigma}$ is the number of occurrences of $\sigma$ in $w$ if it is finite, and is $\infty$ otherwise. Thus, $\parikh(w)\in \bbN_\infty^{\Sigma}$. Moreover, since $w$ is infinite, at least one coordinate of $\parikh(w)$ is $\infty$, since we often restrict ourselves to the Parikh images of infinite words, we denote by $\bbNinf=\bbN_\infty^\Sigma\setminus \bbN^\Sigma$ the set of vectors that have at least one $\infty$ coordinate. Note that $\vec{v}\in \bbNinf$ if and only if $\vec{v}$ is the Parikh image of some infinite word over $\Sigma$.

For words $w,w'\in \Sigma^\omega$, we abuse notation slightly and write $w\sim w'$ if $\parikh(w)=\parikh(w')$, as well as refer to $w'$ as a \emph{permutation} of $w$. 
We now refine the notion of permutation by restricting permutations to finite ``windows'': let $k\in \bbN$, we say that $w$ is a \emph{$k$-window permutation} of $w'$, and we write $w\simkw w'$ (the symbol $\boxplus$ represents ``window'') if $w=x_1\cdot x_2\cdots$ and $w'=y_1\cdot y_2\cdots$ where for all $i\ge 1$ we have $|x_i|=|y_i|=k$ and $x_i\sim y_i$. That is, $w'$ can be obtained from $w$ by permuting contiguous disjoint windows of size $k$ in $w$. Note that if $w\simkw w'$ then in particular $w\sim w'$, but the former is much more restrictive.

\subsubsection{Semilinear Sets}
Let $\bbN=\{0,1,\ldots\}$ be the natural numbers. For dimension $d\ge 1$, we denote vectors in $\bbN^d$ in bold (e.g., $\vec{p}\in \bbN^d$). Consider a vector $\vec{b}\in \bbN^d$ and $P\subseteq \bbN^d$, the \emph{linear set} generated by the \emph{base} $\vec{b}$ and \emph{periods} $P$ is 
\[
\lin(\vec{b},P)=\{\vec{b}+\sum_{\vec{p}\in P}\lambda_p \vec{p}\mid \vec{b}\in \bbN^d \text{ and }\lambda_p\in \bbN\text{ for all }\vec{p}\in P\}
\]
A \emph{semilinear set} is then $\bigcup_{i\in I} \lin(\vec{b_i},P_i)$ for a finite set $I$ and pairs $(\vec{b_i},P_i)$. 

Semilinear sets are closely related to the Parikh image of regular languages (in that the Parikh image of a regular language is semilinear)~\cite{parikh1966context}. We will cite specific results relating to this connection as we use them.

\subsubsection{Jumping Automata}
Consider an automaton $\cA=\tup{\Sigma,Q,\delta,Q_0,\alpha}$. Over finite words, we view $\cA$ as a \emph{jumping automaton} by letting it read its input in a non-sequential way, i.e. ``jump'' between letters as long as all letters are read exactly once. Formally, $\cA$ accepts a word $w$ as a jumping automaton if it accepts some permutation of $w$. Thus, we define the \emph{jumping language} of $\cA$ to be \[\jfinL(\cA)=\{w\in \Sigma^* \mid \exists w'\sim w\text{ such that } w'\in \finL(\cA)\}\]

We now turn to define jumping automata over infinite words. 
As discussed in \cref{sec:intro}, we consider three jumping semantics for $\cA$, as follows:
\begin{definition}
\label{def:jumping}
Consider an automaton $\cA=\tup{\Sigma,Q,\delta,Q_0,\alpha}$. 
\begin{itemize}
    \item Its \emph{Jumping language} is:
    \\ \hspace*{1cm} $\jL(\cA)=\{w\in \Sigma^\omega\mid \exists w'\sim w\text{ such that } w'\in \regL(\cA)\}$.
\item For $k\in \bbN$, its \emph{$k$-window Jumping language} is:
\\ \hspace*{1cm} $\kwinL(\cA)=\{w\in \Sigma^\omega\mid \exists w'\simkw w\text{ such that } w'\in \regL(\cA)\}$.
\item Its \emph{$\exists$-window Jumping language} is:
\\ \hspace*{1cm} $\ewinL(\cA)=\{w\in \Sigma^\omega\mid \exists k\in \bbN \wedge \exists w'\simkw w\text{ such that } w'\in \regL(\cA)\}$.
\end{itemize}
\end{definition}

\begin{remark}
    \label{rmk:jumping_semantics}
    For jumping automata over finite words, the model is sometimes defined by allowing the transition function of the automaton to ``jump'' to any letter in the input and consume it~\cite{meduna2012jumping}. For infinite words, this definition does not capture the (arguably) correct notion of jumping automata, since an automaton could skip letters without ever returning to them, e.g., for input $(ab)^\omega$ read only $a^\omega$ by jumping over all the $b$'s. Under our definition, the latter is not allowed, since $a^\omega\not \sim (ab)^\omega$. 
    Clearly, however, one can still obtain our ``Jumping'' definition by considering permutations of words, rather than a jumping reading head.
\end{remark}

\section{Jumping Languages}
\label{sec:JBW}
In this section we study the properties of $\jL(\cA)$ for an automaton $\cA$. We start by characterizing $\jL(\cA)$ using an extension of semilinear sets.

\subsection{Jumping Languages and Masked Semilinear Sets}
\label{subsec:JBA and semilinear}
Recall that $\bbNinf=\bbN_\infty^\Sigma\setminus \bbN^\Sigma$ and consider a vector $\vec{v}\in \bbNinf$. We separate the $\infty$ coordinates of $\vec{v}$ from the finite ones by, intuitively, writing $\vec{v}$ as a sum of a vector in $\bbN^\Sigma$ and a vector from $\maskSetExplicit$.
Formally, let $\maskSet=\maskSetExplicit$ be the set of \emph{masks}, namely vectors with entries $0$ and $\infty$ that are not all $0$, we refer to each $\mask\in \maskSet$ as a \emph{mask}. We denote by $\mask|_0=\{\sigma\in \Sigma\mid \mask(\sigma)=0\}$ and $\mask|_\infty=\{\sigma\in \Sigma\mid \mask(\sigma)=\infty\}$ the 0 and $\infty$ coordinates of $\mask$, respectively.

For $\vec{x}\in \bbN^\Sigma$ and $\mask\in \maskSet$ let $\vec{x}\oplus \mask\in \bbNinf$ be the vector such that for all $\sigma\in \Sigma$ $(\vec{x}+\mask)_\sigma=\vec{x}_\sigma$ if $\sigma\in \mask|_0$ and $(\vec{x}+\mask)_\sigma=\infty$ if $\sigma\in \mask|_\infty$. 
Note that every $\vec{v}\in \bbNinf$ can be written as $\vec{x}+\mask$ where $\vec{x}$ is obtained from $\vec{v}$ be replacing $\infty$ with 0 (or indeed with any number), and having $\mask$ match the $\infty$ coordinates of $\vec{v}$.

We now augment semilinear sets with masks. A \emph{masked semilinear set} is a union of the form $S=\bigcup_{\mask\in \maskSet} S_\mask+\mask$ where $S_\mask$ is a semilinear set for every $\mask$. 
We also interpret $S$ as a subset of $\bbNinf$ by interpreting addition as adding $\mask$ to each vector in the set $S_\mask$. Note that the union above is disjoint, since adding distinct masks always results in distinct vectors.

Consider a mask $\mask\in \maskSet$. Two vectors $\vec{u},\vec{v}\in \bbN^\Sigma$ are called \emph{$\mask$-equivalent} if $\vec{u}+\mask=\vec{v}+\mask$ (i.e., if they agree on $\mask|_0$). 
We say that a semilinear set $R\subseteq \bbN^\Sigma$ is \emph{$\mask$-oblivious} if for every $\mask$-equivalent vectors $\vec{u},\vec{v}$ we have $\vec{u}\in R\iff \vec{v}\in R$. Intuitively, an $\mask$-oblivious set does not ``look'' at $\mask|_\infty$.

We say that the masked semilinear set $S$ above is \emph{oblivious} if every $S_\mask$ is $\mask$-oblivious. Note that the property of being oblivious refers to a specific representation of $S$ with semilinear sets $S_\mask$ for every mask $\mask$.
In a way, it is natural for a masked semilinear set to be oblivious, since semantically, adding $\mask$ to a vector in $S_\mask$ already ignores the $\infty$ coordinates of $\mask$, so $S_\mask$ should not ``care'' about them. 
Moreover, if a set $S_\mask$ is $\mask$-oblivious, then in each of its linear components we can, intuitively, partition its period vectors to two types: those that have $0$ in all the masked coordinates, and the remaining vectors that allow attaining any value in the masked coordinates. We refer to this as a \emph{canonical $\mask$-oblivious representation}, as follows.
\begin{definition}
\label{def:canonical}
    A semilinear set $R\subseteq \bbN^\Sigma$ is in \emph{canonical $\mask$-oblivious form} for a mask $\mask$ if $R=\bigcup_{i=1}^k \lin(\vec{b_i},P_i)$ such that for every $1\le i\le k$ we have $P_i=Q_i\cup E_{\mask}$ with the following conditions:
    \begin{itemize}
        \item $\vec{b_i}(\sigma)=0$ for every $\sigma\in \mask|_\infty$.
        \item $\vec{p}(\sigma)=0$ for every $p\in Q_i$ and $\sigma\in \mask|_\infty$.
        \item $E_{\mask}=\{\vec{e_\sigma}\mid \sigma\in \mask|_\infty\}$ with $\vec{e_\sigma}(\tau)=1$ if $\sigma=\tau$ and $0$ otherwise.
    \end{itemize}
\end{definition}

We now show that every masked semilinear set can be translated to an equivalent one in canonical oblivious form (see example in \cref{fig:canonization}). 
\begin{figure}[ht]
    \centering
    \scriptsize
    \begin{tikzpicture}[auto,node distance=2.8cm,scale=1]
    \node (SL1) [draw=none] at (0,0) {
    $\lin\left(\begin{pmatrix}
    1\\0\\\vec{9}\\\vec{4}
    \end{pmatrix},\left\{\begin{pmatrix}
    1\\2\\\vec{5}\\\vec{7}
    \end{pmatrix},\begin{pmatrix}
    1\\0\\\vec{1}\\\vec{3}
    \end{pmatrix}\right\}\right)+\begin{pmatrix}
    0\\0\\\infty\\\infty
    \end{pmatrix}$
    };
    \node (arr) [draw=none, right = 2pt of SL1] {$\rightsquigarrow$};
    \node (SL2) [draw=none, right = 2pt of arr] {
    $\lin\left(\begin{pmatrix}
    1\\0\\0\\0
    \end{pmatrix},\left\{\begin{pmatrix}
    1\\2\\0\\0
    \end{pmatrix},\begin{pmatrix}
    1\\0\\0\\0
    \end{pmatrix},\begin{pmatrix}
    0\\0\\1\\0
    \end{pmatrix},\begin{pmatrix}
    0\\0\\0\\1
    \end{pmatrix}\right\}\right)+\begin{pmatrix}
    0\\0\\\infty\\\infty
    \end{pmatrix}$
    };
\end{tikzpicture}
    \caption{On the left, a masked semilinear set (actually linear), and on the right an equivalent oblivious canonical form. The bold numbers on the left get masked away, so are replaced by $0$.}
    \label{fig:canonization}
\end{figure}

\begin{lemma}
    \label{lem:oblivious}
    Consider a masked semilinear set $S=\bigcup_{\mask\in \maskSet} S_\mask+\mask$. Then there exists an oblivious masked semilinear set $T=\bigcup_{\mask\in \maskSet} T_\mask+\mask$ that represents the same subset of $\bbNinf$, and every $T_\mask$ is in canonical $\mask$-oblivious form. 
\end{lemma}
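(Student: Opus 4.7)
The plan is to handle each mask $\mask$ independently, and for each mask, process each linear component of $S_\mask$ separately, producing the canonical form of \cref{def:canonical}. The key insight is that after adding $\mask$, the $\mask|_\infty$ coordinates of any vector in $S_\mask$ are obliterated (overwritten by $\infty$), so one can freely modify those coordinates inside $S_\mask$ without changing the underlying subset of $\bbNinf$.

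Concretely, for a fixed mask $\mask\in\maskSet$ and a linear component $\lin(\vec{b},P)$ of $S_\mask$, I would define $\vec{b}'$ to be $\vec{b}$ with every coordinate in $\mask|_\infty$ zeroed out, and likewise let $Q=\{\vec{p}'\mid \vec{p}\in P\}$ where each $\vec{p}'$ zeroes out the $\mask|_\infty$ coordinates of $\vec{p}$. With $E_\mask=\{\vec{e_\sigma}\mid \sigma\in \mask|_\infty\}$, the candidate canonical component is $\lin(\vec{b}',Q\cup E_\mask)$, which satisfies all three bullets of \cref{def:canonical} by construction. The component $T_\mask$ of the target masked semilinear set is then the union of these transformed components over all linear pieces of $S_\mask$.

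The core verification is the equality $\lin(\vec{b}',Q\cup E_\mask)+\mask=\lin(\vec{b},P)+\mask$ as subsets of $\bbNinf$. For ``$\supseteq$'', any $\vec{u}=\vec{b}+\sum_{\vec{p}\in P}\lambda_p\vec{p}\in\lin(\vec{b},P)$ has a natural counterpart $\vec{v}=\vec{b}'+\sum_{\vec{p}\in P}\lambda_p\vec{p}'\in\lin(\vec{b}',Q\cup E_\mask)$ using zero coefficients on $E_\mask$; these agree on $\mask|_0$, so $\vec{u}+\mask=\vec{v}+\mask$. For ``$\subseteq$'', given an element $\vec{v}$ of the new linear set, the same coefficients on $\vec{b}$ and $P$ produce an element $\vec{u}$ of the original whose $\mask|_\infty$ coordinates differ from $\vec{v}$'s, but after adding $\mask$ those coordinates become $\infty$ on both sides regardless. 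Taking unions over linear components and then over masks yields the desired $T=\bigcup_{\mask\in\maskSet} T_\mask+\mask$.

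Obliviousness of each $T_\mask$ is then automatic: the periods $E_\mask$ let any non-negative integer value be chosen independently at each $\sigma\in\mask|_\infty$, while $\vec{b}'$ and the periods in $Q$ contribute $0$ on $\mask|_\infty$ by construction. So if $\vec{v}\in T_\mask$ and $\vec{v}'$ agrees with $\vec{v}$ on $\mask|_0$, I can turn $\vec{v}$ into $\vec{v}'$ by adjusting only the coefficients of the $E_\mask$ periods. The main (minor) obstacle is the bookkeeping to confirm that zeroing the $\mask|_\infty$ coordinates on one side and adjoining the $E_\mask$ periods on the other exactly cancel the effect of $+\mask$; this works precisely because the only nonzero entries of $\mask$ are $\infty$, so whatever values appear in $\mask|_\infty$ are washed out on both sides.
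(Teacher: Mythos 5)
Your proposal is correct and follows essentially the same route as the paper: both zero out the $\mask|_\infty$ coordinates of the base and period vectors, adjoin the unit periods $E_\mask$, and verify that adding $\mask$ washes out the difference in both directions. The only cosmetic difference is that the paper first defines $T_\mask$ abstractly as the $\mask$-equivalence closure of $S_\mask$ and then identifies it with this canonical semilinear set, whereas you construct the canonical set directly; the underlying argument is identical.
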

\begin{proof}
    Intuitively, we obtain the oblivious representation by adding to each $S_\mask$ vectors where the $\infty$-coordinates of $\mask$ can take any value.

    Formally, for every mask $\mask\in \maskSet$ define 
    \[T_\mask=\{\vec{v}\in \bbN^\Sigma \mid \exists \vec{u}\in S_\mask\text{ such that }\vec{u}+\mask=\vec{v}+\mask\}\]
    That is, $T_\mask$ is obtained from $S_\mask$ by including every vector that is $\mask$-equivalent to some vector in $S_\mask$. 

    By construction, we have that $T_\mask$ is $\mask$-oblivious. Moreover, it is easy to see that $S_\mask+\mask=T_\mask+\mask$. Indeed, $S_\mask\subseteq T_\mask$, thus giving one containment, and for every $\vec{x}\in T_{\mask}+\mask$, write $\vec{x}=\vec{v}+\mask$ for some $\vec{v}\in T_\mask$, then there exists some $\vec{u}\in S_\mask$ with $\vec{v}+\mask=\vec{u}+\mask\in S_\mask+\mask$.
    
    It remains to show that $T_\mask$ is semilinear and represent it in canonical form. 
    To this end, for every vector $\vec{p}\in \bbN^\Sigma$, define $\vec{p}|_\mask$ by $\vec{p}|_\mask(\sigma)=\vec{p}(\sigma)$ if $\sigma\in \mask|_0$ and $\vec{p}|_\mask(\sigma)=0$ if $\sigma\in \mask|_\infty$. That is, we replace all the $\infty$ coordinates of $\mask$ in $\vec{p}$ by $0$. We lift this to sets of vectors: for a set $P$, let $P|_\mask=\{\vec{p}|_\mask\mid \vec{p}\in P\}$. 

    Now, write $S_\mask=\bigcup_{i\in I}\lin(\vec{b_i},P_i)$, we claim the following:
    \[T_\mask=\bigcup_{i\in I}\lin(\vec{b_i}|_\mask,P_i|_\mask \cup E_\mask)\]
    That is, $T_\mask$ is obtained by zeroing the $\infty$ coordinates of $\mask$ in $S_\mask$, and then arbitrarily adding numbers in these coordinates, using $\vec{e_\sigma}\in E_\mask$ (as per \cref{def:canonical}).
    
    Let $\vec{v}\in T_\mask$, then there exists $\vec{u}\in S_\mask$ such that $\vec{v}+\mask=\vec{u}+\mask$. That is, $\vec{u}(\sigma)=\vec{v}(\sigma)$ for all $\sigma\in \mask|_0$. Since $\vec{u}\in S_\mask$, we can write $\vec{u}=\vec{b_i}+\lambda_1p_1+\ldots+\lambda_k p_k$ for some $i\in I$, $p_1,\ldots,p_k\in P_i$ and $\lambda_1,\ldots,\lambda_k\in \bbN$.
    It then follows that 
    \[\vec{v}=\vec{b_i}|_\mask+\lambda_1p_1|_\mask+\ldots+\lambda_k p_k|_\mask+\sum_{\sigma\in \mask|_\infty} \vec{v}(\sigma)\vec{e_\sigma}\]
    and the latter form is in $\lin(\vec{b_i}|_\mask,P_i|_\mask \cup E_\mask)$.

    Conversely, let $\vec{v}\in \lin(\vec{b_i}|_\mask,P_i|_\mask \cup E_\mask)$, and write 
    \[\vec{v}=\vec{b_i}|_\mask+\lambda_1p_1|_\mask+\ldots+\lambda_k p_k|_\mask+\sum_{\sigma\in \mask|_\infty} \beta_\sigma \vec{e_\sigma}\]
    Define $\vec{u}= \vec{b_i}+\lambda_1p_1+\ldots+\lambda_k p_k$, then $\vec{u}\in S_\mask$ and $\vec{u}+\mask=\vec{v}+\mask$ (since the only coordinates where $\vec{v}$ differs from $\vec{u}$ are those in $\mask|_\infty$), so $\vec{v}\in T_\mask$, and we are done.

    We conclude by defining $T=\bigcup_{\mask\in \maskSet}T_\mask +\mask$, which is an oblivious masked semilinear set in canonical form, as required.
    \qed
\end{proof}

\begin{complexitya}[of \cref{lem:oblivious}]
    \label{comp:SL_to_oblivious}
    The description of $T$ involves introducing at most $|\Sigma|$ vectors to the periods of each semilinear set, and changing some entries to $0$ in the existing vectors. Thus, the description of $T$ is of polynomial size in that of $S$.
\end{complexitya}

Our main result in this section is that the Parikh images of jumping languages coincide with masked semilinear sets. We prove this in the following lemmas.
\begin{lemma}
\label{lem:JBA to semilinear}
Let $\cA$ be an automaton, then $\parikh(\jL(\cA))$ is a masked semilinear set. Moreover, we can effectively compute a representation of it from that of $\cA$.
\end{lemma}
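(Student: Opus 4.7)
The first observation is that $\parikh(\jL(\cA)) = \parikh(\regL(\cA))$: by definition $\jL(\cA)$ is the permutation closure of $\regL(\cA)$, and permutations preserve Parikh images. So the task reduces to expressing $\parikh(\regL(\cA)) \subseteq \bbNinf$ as a masked semilinear set. I will use the standard lasso structure of B\"uchi-accepted words: a word $w \in \regL(\cA)$ can be written as $u v^\omega$ with $u \in \Sigma^*, v \in \Sigma^+$ and a run $q_0 \xrightarrow{u} q \xrightarrow{v} q$ for some $q_0 \in Q_0$, $q \in \alpha$. Crucially, the $\infty$-coordinates of $\parikh(uv^\omega)$ are exactly the letters appearing in $v$, and on the remaining coordinates it agrees with $\parikh(u)$. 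This suggests indexing the construction by a mask $\mask \in \maskSet$ whose $\mask|_\infty$ is the alphabet of the cycle.

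Concretely, for each $\mask \in \maskSet$ I would let $A_\mask \subseteq \alpha$ be the set of accepting states $q$ admitting a cycle $q \xrightarrow{v} q$ whose label uses \emph{exactly} the letters of $\mask|_\infty$. An SCC analysis on the subgraph of $\cA$ restricted to $\mask|_\infty$-labelled transitions gives effective computability of $A_\mask$: $q$ belongs to it iff $q$ lies in an SCC of that subgraph that is nontrivial and collectively carries every letter of $\mask|_\infty$ on some edge. Let $\cA_\mask$ be the \NFA obtained from $\cA$ by replacing the accepting set with $A_\mask$, put $L_\mask = \finL(\cA_\mask)$, and invoke Parikh's theorem to obtain an effective semilinear representation of $\parikh(L_\mask) \subseteq \bbN^\Sigma$. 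Finally, let $S_\mask$ be the semilinear set obtained by zeroing the $\mask|_\infty$-coordinates in every base and every period of that representation.

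The key equality I would then establish is
\[
\parikh(\regL(\cA)) \;=\; \bigcup_{\mask \in \maskSet} (S_\mask + \mask),
\]
exhibiting the right-hand side as a masked semilinear set. For $(\supseteq)$, given $\vec{x} + \mask$ with $\vec{x} \in S_\mask$, I would lift $\vec{x}$ back to a word $u \in L_\mask$ with matching Parikh image on $\mask|_0$, pick any cycle $v$ at the endpoint of $u$ witnessing membership of that endpoint in $A_\mask$, and verify $\parikh(u v^\omega) = \vec{x} + \mask$ directly. For $(\subseteq)$, given $w \in \regL(\cA)$ with mask $\mask$, I fix an accepting run $\rho$; the states $\rho$ visits infinitely often lie in a single SCC of $\cA$, and the edges it traverses infinitely often are labelled by exactly $\mask|_\infty$, so that SCC, restricted to $\mask|_\infty$-labelled transitions, still admits a closed walk at some $q \in \alpha$ covering every letter of $\mask|_\infty$. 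Taking $u$ to be a prefix of $w$ ending at such a $q$ and long enough to contain every occurrence of every $\sigma \in \mask|_0$ then yields $u \in L_\mask$ whose Parikh image agrees with $\parikh(w)$ on $\mask|_0$, witnessing $\parikh(w) \in S_\mask + \mask$. The only genuinely content-laden step is this last SCC argument; everything else is routine bookkeeping on semilinear sets and construction of $\cA_\mask$, and all ingredients (SCC analysis, construction of $\cA_\mask$, Parikh's theorem, coordinate zeroing) are effective over the $2^{|\Sigma|}-1$ masks.
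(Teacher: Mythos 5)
Your proof is correct and follows essentially the same route as the paper's: both reduce to $\parikh(\regL(\cA))$, index the decomposition by the mask of letters occurring infinitely often, realize each mask by cycling through an accepting state using exactly the letters of $\mask|_\infty$, and handle the finite prefix part with Parikh's theorem. The only difference is presentational --- the paper identifies the realizable masks at the language level, via the folklore decomposition $\regL(\cA)=\bigcup_i S_i\cdot T_i^\omega$ and the sets $I(T_i)$ of masks coverable by words of $T_i\cap\Gamma^*$, whereas you identify them at the automaton level via an SCC analysis of the $\mask|_\infty$-restricted transition graph, an equivalent and equally effective criterion (and your opening remark that every accepted word has the form $uv^\omega$ is not literally true, but your actual two-direction argument never relies on it).
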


\begin{proof}
Consider $\cA=\tup{\Sigma,Q,\delta,Q_0,\alpha}$. By \cref{def:jumping} we have that $\parikh(\jL(\cA))=\parikh(\regL(\cA))$. Indeed, both sets contain exactly the Parikh images of words in $\regL(\cA)$. Recall (or see \cref{apx:omega_regular_union}) that every $\omega$-regular language can be (effectively) written as a union of the form $\regL(\cA)=\bigcup_{i=1}^{m} S_i\cdot T_i^\omega$ where $S_i,T_i\subseteq \Sigma^*$ are regular languages over finite words.

Intuitively, we will show that $\parikh(T_i^\omega)$ can be separated to letters that are seen finitely often, and those that are seen infinitely often, where the latter will induce the mask.
To this end, for every $\emptyset\neq \Gamma\subseteq \Sigma^*$ define $\mask_\Gamma\in \maskSet$ by setting $\mask_\Gamma(\sigma)=\infty$ if $\sigma\in \Gamma$ and $\mask_\Gamma(\sigma)=0$ if $\sigma\notin \Gamma$. Now, for every regular language $T_i$ in the union above define 
\[I(T_i)=\{\mask_\Gamma \mid \forall \sigma\in \Gamma\  \exists w\in T_i\cap \Gamma^* \text{ s.t. }\parikh(w)(\sigma)>0\}\]
That is, $I(T_i)$ is the set of masks $\mask_\Gamma$ such that every letter in $\Gamma$ occurs in some word in $T_i$ that contains only letters from $\Gamma$. Intuitively, $\mask_\Gamma\in I(T_i)$ can be attained by an infinite concatenation of words from $T_i$ by covering all letters in $\Gamma$ whilst not using letters outside of $\Gamma$.

We now claim that for every $1\le i\le m$ it holds that $\parikh(S_i\cdot T_i^\omega)=\parikh(S_i\cdot T_i^*)+I(T_i)$. To reduce clutter, we drop the subscript $i$ for this proof.
Consider $\vec{v}\in \parikh(S\cdot T^\omega)$, then there exists a word $w=x_1x_2\cdots $ such that $\parikh(w)=\vec{v}$ with $x_1\in S$ and $x_j\in T$ for all $j>1$. Let $\Gamma$ be the set of letters that occur infinitely often in $w$, and let $N_0\ge 2$ be such that for all $n\ge N_0$ we have $x_n\in \Gamma^*$. Observe that $\mask_\Gamma\in I(T)$ since for every $\sigma\in \Gamma$ there exists some word $x_j\in T_i\cap \Gamma^*$ (for $j\ge N_0$) such that $\sigma$ occurs in $x_j$. Thus, $\parikh(x_{N_0}x_{N_0+1}\cdots)=\mask_\Gamma$. Moreover, $x_1\cdots x_{N_0-1}\in S\cdot T^*$, so $\vec{v}=\parikh(w)=\parikh(x_1\cdots x_{N_0-1})+\mask_\Gamma\in \parikh(S\cdot T^*)+I(T)$.

Conversely, let $\vec{v}\in \parikh(S\cdot T^*)+I(T)$, then there exist $y\in S\cdot T^*$ and $\mask_\Gamma\in I(T)$ for some $\Gamma\subseteq \Sigma$ such that $\vec{v}=\parikh(y)+\mask_\Gamma$. Since $\mask_\Gamma\in I(T)$, there exists a set of words $\{z_1,\ldots ,z_r\}\subseteq T\cap \Gamma^*$ such that every $\sigma\in \Gamma$ occurs in some word in this set. We thus have $\parikh((z_1\cdots z_r)^\omega)=\mask_\Gamma$ and $(z_1\cdots z_r)^\omega\in T^\omega$, so $y\cdot (z_1\cdots z_r)^\omega\in S\cdot T^\omega$ and we have
$\vec{v}=\parikh(y)+\mask_\Gamma=\parikh(y\cdot (z_1\cdots z_r)^\omega)\in \parikh(S\cdot T^\omega)$, concluding our claim.

It follows that $\parikh(\jL(\cA))=\bigcup_{i=1}^k\parikh(S_i\cdot T_i^*)+I(T_i)$. We can now rearrange the sum by masks instead of by index $i$: for every mask $\mask\in \maskSet$ write $S_\mask=\bigcup_{i: \mask\in I(T_i)} \parikh(S_i\cdot T_i^*)$ (note that $S_\mask$ could be empty). Since $S_i\cdot T_i^*$ is a regular language, then by Parikh's theorem~\cite{parikh1966context} its Parikh image is semilinear, so $S_\mask$ is semilinear. Thus, $\parikh(\jL(\cA))=\bigcup_{\mask\in \maskSet} S_\mask +\mask$ is a masked semilinear set.
\qed
\end{proof}
\begin{complexitya}[of \cref{lem:JBA to semilinear}]
Writing $\regL(\cA)$ as a union involves only a polynomial blowup (see~\cref{apx:omega_regular_union}). Then, we split the resulting expression to a union over $2^{|\Sigma|}$ masks. Within the union, we convert a nondeterministic automaton for $S_i\cdot T_i^*$ to a semilinear set. By \cite[Theorem 4.1]{to2010parikh} (also~\cite{kopczynski2010parikh}), the resulting semilinear set has description size polynomial in the number of states $n$ of $\cA$ and singly-exponential in $|\Sigma|$. Moreover, the translation can be computed in time $2^{O(|\Sigma|^2\log (n|\Sigma|))}$.
\end{complexitya}

For the converse direction, we present a stronger result, namely that we can construct a \emph{deterministic} automaton to capture a masked semilinear set.

\begin{lemma}
\label{lem:semilinear to JBA}
Consider a masked semilinear set $S$, then there exists a deterministic automaton $\cA$ such that $\parikh(\jL(\cA))=S$.
\end{lemma}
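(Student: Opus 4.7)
The plan is to first invoke \cref{lem:oblivious} to assume $S=\bigcup_{\mask\in\maskSet} (S_\mask+\mask)$ is in canonical oblivious form, so each $S_\mask=\bigcup_i \lin(\vec{b_{i,\mask}}, Q_{i,\mask}\cup E_\mask)$ with $\vec{b_{i,\mask}}$ and the vectors of $Q_{i,\mask}$ vanishing on $\mask|_\infty$. Since $\parikh(\jL(\cA))=\parikh(\regL(\cA))$ for any automaton $\cA$, it suffices to build a deterministic B\"uchi automaton whose $\omega$-regular language has Parikh image exactly $S$. The approach is to construct a deterministic B\"uchi automaton $\cA_{i,\mask}$ for each linear component and then take their product union, which remains a deterministic B\"uchi automaton.

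For a fixed component $(\vec{b},Q,\mask)$ with $Q=\{p_1,\ldots,p_r\}$, I would fix an enumeration $\sigma_1,\ldots,\sigma_k$ of $\mask|_0$ and use the canonical word $\pi(\vec{v})=\sigma_1^{v_{\sigma_1}}\cdots\sigma_k^{v_{\sigma_k}}$. The regular language $R=\pi(\vec{b})\cdot\pi(p_1)^*\cdots\pi(p_r)^*$ over the alphabet $\mask|_0$ has $\parikh(R)=\lin(\vec{b},Q)$ and, being regular, is accepted by some DFA $D$. In parallel I would build a deterministic B\"uchi automaton $V$ over $\mask|_\infty$ accepting exactly those infinite words in which every letter of $\mask|_\infty$ appears infinitely often; a round-robin automaton with $|\mask|_\infty|$ states, each demanding one specific letter before rotating to the next, suffices. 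The automaton $\cA_{i,\mask}$ then operates in two phases: while reading letters from $\mask|_0$ it runs $D$; on the first letter from $\mask|_\infty$, provided $D$ is in an accepting state, it switches to $V$ and feeds that letter into $V$, otherwise it moves to a sink; in the second phase any letter from $\mask|_0$ sends the run to the sink. The B\"uchi accepting states are those of the second phase where $V$ is accepting.

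By construction $\regL(\cA_{i,\mask})$ consists of words $uv$ with $u\in R$ and $v\in(\mask|_\infty)^\omega$ containing every letter of $\mask|_\infty$ infinitely often, so $\parikh(\regL(\cA_{i,\mask}))=\lin(\vec{b},Q)+\mask=\lin(\vec{b},Q\cup E_\mask)+\mask$. Taking the product union of all $\cA_{i,\mask}$ across masks and indices---states are tuples of the component states, and a tuple is B\"uchi-accepting iff some coordinate is accepting in its own component---yields a deterministic B\"uchi automaton $\cA$ recognizing $\bigcup_{i,\mask}\regL(\cA_{i,\mask})$, so $\parikh(\jL(\cA))=\parikh(\regL(\cA))=S$ as desired.

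The hard part will be preserving determinism at the phase boundary inside $\cA_{i,\mask}$: this works because the partition of $\Sigma$ into $\mask|_0$ and $\mask|_\infty$ is fixed in advance, the first letter from $\mask|_\infty$ is thus dictated by the input, and the canonical form of \cref{lem:oblivious} is essential in guaranteeing that the base $\vec{b}$ and the period vectors in $Q$ never generate $\mask|_\infty$ letters, so every valid finite prefix $u$ indeed lives in $(\mask|_0)^*$. A secondary point is closure of deterministic B\"uchi automata under union via the product construction: whenever the product visits its accepting set infinitely often, a pigeonhole argument pins down a single coordinate that is accepting infinitely often, and the corresponding component accepts.
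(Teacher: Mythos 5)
Your proposal is correct and follows essentially the same route as the paper's proof: reduce to canonical $\mathfrak{m}$-oblivious form, build for each linear component a deterministic automaton whose finite phase over $\mathfrak{m}|_0$ realizes $\lin(\vec{b},Q)$ and whose infinite tail over $\mathfrak{m}|_\infty$ realizes the mask, with determinism at the junction guaranteed precisely because the canonical form forces base and period vectors to vanish on $\mathfrak{m}|_\infty$, and then combine components by a product union. The only (immaterial) differences are that the paper uses a single cyclic word $w_E^\omega$ for the tail where you use a round-robin automaton for ``every letter of $\mathfrak{m}|_\infty$ occurs infinitely often,'' and a starred union of period words where you use a product of starred singletons; both choices yield the same Parikh images.
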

\begin{proof}
By \cref{lem:oblivious}, we can assume $S=\bigcup_{\mask\in \maskSet} S_\mask+\mask$ and that every $S_\mask$ is in canonical $\mask$-oblivious form.
Let $\mask\in \maskSet$ and write $S_\mask=\bigcup_{i=1}^k \lin(\vec{b_i},P_i\cup E_\mask)$ as per \cref{def:canonical}. Consider a linear set $L=\lin(\vec{b},P\cup E_\mask)$ in the union above, and we omit the subscript $i$ for brevity. We start by constructing a deterministic automaton $\cD$ such that $\parikh(\jfinL(\cD))=L$. To this end, let $w_{\vec{b}}\in \Sigma^*$ be a word such that $\parikh(w_{\vec{b}})=\vec{b}$ and similarly for every $\vec{p}\in P$ let $w_{\vec{p}}\in \Sigma^*$ such that $\parikh(w_{\vec{p}})=\vec{p}$. Note that $\parikh(\sigma)=\vec{e_\sigma}$ for every $\sigma\in \mask|_\infty$.

Let $\cD'$ be a deterministic automaton for the regular expression $w_{\vec{b}}\cdot (w_{\vec{p_1}}+\ldots+w_{\vec{p_n}})^*$ where $P=\{\vec{p_1},\ldots,\vec{p_n}\}$. Next, let $w_{E}=\sigma_1\cdots \sigma_k$ be a word obtained by concatenating all the letters in $\mask|_\infty$ in some order. We obtain $\cD$ from $\cD'$ by connecting every accepting state of $\cD'$, upon reading $\sigma_1$, to a cycle that allows reading $w_{E}^\omega$. The  accepting states of $\cD$ are those on the $w_{E}$ cycle. 
Crucially, the transition from $\cD'$ upon reading $\sigma_1$ retains determinism, since $P$ is in canonical form, and therefore $\vec{p}(\sigma_1)=0$ for every $\vec{p}\in P$. That is, the letter $\sigma_1$ is not seen in any transition of $\cD'$, allowing us to use it in the construction of $\cD$.
The construction is demonstrated in \cref{fig:det_for_semilinear}. 
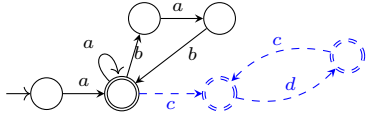
\begin{figure}
    \centering
    \scriptsize
    \begin{tikzpicture}[auto,node distance=2.8cm,scale=1]
        \tikzset{every state/.style={minimum size=12pt, inner sep=1}};
        \node (q0) [initial, state, initial text = {}] at (0,0) {};
        \node (q1) [state,accepting] at (1,0) {};
        \node (q2) [state] at (1.3,1) {};
        \node (q3) [state] at (2.3,1) {};
        \node (q4) [state,blue,accepting,dashed] at (2.3,0) {};
        \node (q5) [state,blue,accepting,dashed] at (4,0.5) {};
        \path [-stealth]
        (q0) edge node [above] {$a$} (q1)
        (q1) edge [in=100,out=135,loop] node {$a$} (q1)
        (q1) edge node [below,pos=0.9] {$b$} (q2)
        (q2) edge node [above,pos=0.4] {$a$} (q3)
        (q3) edge node [below,pos=0.2] {$b$} (q1)
        (q1) edge[blue,dashed] node [below] {$c$} (q4)
        (q4) edge[bend right,blue,dashed] node [above] {$d$} (q5)
        (q5) edge[bend right,blue,dashed] node [above] {$c$} (q4);
    \end{tikzpicture}
    \caption{The automaton $\cD$ for the linear set from \cref{fig:canonization} (over $\Sigma=\{a,b,c,d\}$), with the representative words $w_{\vec{b}}=a$ and $\{bab,a,c,d\}$ for the period. The blue (dashed) parts are the addition of $w_E$ and change of accepting states to obtain $\cD$ from $\cD'$.}
    \label{fig:det_for_semilinear}
\end{figure}

By construction, we have that $\regL(\cD)=w_{\vec{b}}\cdot (w_{\vec{p_1}}+\ldots+w_{\vec{p_n}})^*\cdot w_E^\omega$.
Since $\parikh(w_E^\omega)=\mask$, we now have \[ \parikh(\jL(\cD))=\parikh(\regL(\cD))=\parikh(w_{\vec{b}}\cdot(w_{\vec{p_1}}+\ldots+w_{\vec{p_n}})^*\cdot w_E^\omega)=\lin(\vec{b},P)+\mask=L+\mask\]
where the last equality is because adding $\mask$ to a vector in $L$ masks any addition of vectors in $E_\mask$ by $\infty$.

Next, we observe that for two deterministic automata $\cD_1,\cD_2$, we can construct a deterministic automaton $\cD_3$ such that $\jL(\cD_3)=\jL(\cD_1)\cup \jL(\cD_2)$. Indeed, the standard product construction (where the accepting states are pairs of states where either $\cD_1$ or $\cD_2$ accept) preserves determinism. Thus, we can take products to first capture $S_\mask=\bigcup_{i=1}^k \lin(\vec{b_i},P_i\cup E_\mask)$ and then $S=\bigcup_{\mask\in \maskSet} S_\mask+\mask$, as desired.
\qed
\end{proof}

\begin{complexitya}[of \cref{lem:semilinear to JBA}]
    \label{comp:SL_to_JBA}
    We can naively construct an automaton for the regular expression $w_{\vec{b}}\cdot (w_{\vec{p_1}}+\ldots+w_{\vec{p_n}})^*$ of size $\norm{\lin(\vec{b},P)}=\norm{\vec{b}}+\sum_{i=1}^n\norm{\vec{p_i}}$, where e.g., $\norm{\vec{b}}$ is the sum of entries in $\vec{b}$ (i.e., unary representation). Then, determinization can yield a \DFA of size singly exponential. Then, we take unions by the product construction, where the number of copies corresponds to the number of linear sets in $S$, and then to $2^{|\Sigma|}-1$ many masks. This results in a \DFA of size singly exponential in the size of $S$ and doubly-exponential in $|\Sigma|$ (assuming $S$ is represented in unary). Finally, adding the cycle for the mask adds at most $2^{O(|\Sigma|)}$ states (both to a deterministic or nondeterministic automaton).
\end{complexitya}

Combining \cref{lem:JBA to semilinear,lem:semilinear to JBA} we have the following.
\begin{theorem}
\label{thm:JBA equiv semilinear}
A set $S$ is a masked semilinear set if and only if there exists an automaton $\cA$ such that $S=\parikh(\jL(\cA))$.
\end{theorem}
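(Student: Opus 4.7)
The plan is to obtain \cref{thm:JBA equiv semilinear} as an immediate corollary of the two preceding lemmas, since each one already delivers a full direction of the equivalence. No fresh construction is needed; the work has been front-loaded into \cref{lem:JBA to semilinear} and \cref{lem:semilinear to JBA}.

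For the ``only if'' direction, I would take an arbitrary automaton $\cA$ and directly invoke \cref{lem:JBA to semilinear}, which produces (effectively) a masked semilinear representation of $\parikh(\jL(\cA))$ of the form $\bigcup_{\mask\in \maskSet}S_\mask+\mask$. For the ``if'' direction, I would take an arbitrary masked semilinear set $S$ and invoke \cref{lem:semilinear to JBA}, which yields an automaton (in fact a deterministic one) $\cA$ with $\parikh(\jL(\cA))=S$. Chaining the two implications gives the biconditional.

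There is no real obstacle left: the hard part -- namely decomposing an $\omega$-regular language into a $S_i\cdot T_i^\omega$ form, identifying the masks induced by $T_i^\omega$, and conversely assembling a deterministic automaton out of linear pieces while preserving determinism on the transition into the mask cycle -- was already handled inside the two lemmas (with \cref{lem:oblivious} silently supplying the canonical $\mask$-oblivious form needed for determinism). So the proof of \cref{thm:JBA equiv semilinear} is literally one sentence wiring together \cref{lem:JBA to semilinear,lem:semilinear to JBA}; the value of stating it as a theorem is to crystallise the characterisation, which is then reused in \cref{subsec:closure_JBA,subsec:algorithms_JBA}.
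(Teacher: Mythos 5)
Your proposal is correct and matches the paper exactly: the theorem is stated there as an immediate consequence of combining \cref{lem:JBA to semilinear} (the ``only if'' direction) and \cref{lem:semilinear to JBA} (the ``if'' direction), with no further argument. Nothing is missing.
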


\subsection{Jumping Languages -- Closure Properties}
\label{subsec:closure_JBA}
Using the characterizations obtained in \cref{subsec:JBA and semilinear}, we can now obtain several closure properties of jumping languages.
First, we remark that jumping languages are clearly closed under union, by taking the union of the automata and nondeterministically choosing which one to start at. We proceed to tackle intersection and complementation. 

Note that the standard intersection construction of B\"uchi automata does not capture the intersection of jumping languages, as demonstrated in \cref{fig:intersection_NBW_fails}.
\begin{figure}[ht]
    \centering
    \begin{tikzpicture}[auto,node distance=2.8cm,scale=1]
    \node (A) [draw=none] at (-1,0.3) {$\cA$:};
    \node (q0) [initial, state, initial text = {},inner sep=0pt, minimum size=15pt] {$q_{0}$};
    \node (q1) [accepting, state,inner sep=0pt, minimum size=15pt] at (2,0) {$q_{1}$};
    \path [-stealth]
    (q0) edge [bend left=20] node {$a$} (q1)
    (q1) edge [bend left=20] node {$b$} (q0);
\end{tikzpicture}\qquad\qquad
    \begin{tikzpicture}[auto,node distance=2.8cm]
    \node (A) [draw=none] at (-1,0.3) {$\cB$:};
    \node (q0) [initial, state, initial text = {},inner sep=0pt, minimum size=15pt] {$s_{0}$};
    \node (q1) [accepting, state,inner sep=0pt, minimum size=15pt] at (2,0) {$s_{1}$};
    \path [-stealth]
    (q1) edge [bend left=20] node {$a$} (q0)
    (q0) edge [bend left=20] node {$b$} (q1);
\end{tikzpicture}
    \caption{The automata $\cA$ and $\cB$ satisfy $\regL(\cA)=\{(ab)^\omega\}$ and $\regL(\cB)=\{(ba)^\omega\}$. Thus, $\jL(\cA)=\jL(\cB)=\{w\in \Sigma^\omega\mid w\text{ has infinitely many $a$'s and $b$'s}\}$. However, if we start by taking the standard intersection of $\cA$ and $\cB$, we would end up with the empty language.}
    \label{fig:intersection_NBW_fails}
\end{figure}
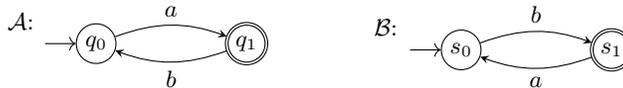
\begin{proposition}
\label{prop:JBA intersection}
Let $\cA$ and $\cB$ be automata, then we can effectively construct an automaton $\cC$ such that $\jL(\cC)=\jL(\cA)\cap \jL(\cB)$.
\end{proposition}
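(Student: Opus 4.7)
The plan is to reduce intersection of jumping languages to intersection of masked semilinear sets, leveraging the characterization of \cref{thm:JBA equiv semilinear}. Since $\jL(\cA)$ depends only on the Parikh image (a word belongs to $\jL(\cA)$ iff its Parikh image lies in $\parikh(\jL(\cA))$), we have
\[
\jL(\cA)\cap \jL(\cB)=\{w\in \Sigma^\omega\mid \parikh(w)\in \parikh(\jL(\cA))\cap \parikh(\jL(\cB))\}.
\]
So it suffices to show that the intersection of two masked semilinear sets is masked semilinear, and then apply \cref{lem:semilinear to JBA} to obtain a (deterministic) $\cC$ realizing it.

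First I would invoke \cref{lem:JBA to semilinear} to obtain masked semilinear representations $S=\bigcup_{\mask} S_\mask+\mask$ for $\parikh(\jL(\cA))$ and $T=\bigcup_{\mask} T_\mask+\mask$ for $\parikh(\jL(\cB))$, and then apply \cref{lem:oblivious} to put both into oblivious canonical form (each $S_\mask$ and $T_\mask$ is $\mask$-oblivious). The key observation is that two vectors in $\bbNinf$ are equal only if they have the same set of $\infty$-coordinates, so the unions over masks are disjoint in $\bbNinf$. Therefore
\[
S\cap T=\bigcup_{\mask\in\maskSet}\bigl((S_\mask+\mask)\cap(T_\mask+\mask)\bigr).
\]

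Next I would show that for oblivious components,
\[
(S_\mask+\mask)\cap(T_\mask+\mask)=(S_\mask\cap T_\mask)+\mask.
\]
The $\supseteq$ direction is immediate. For $\subseteq$, if $\vec{u}=\vec{a}+\mask=\vec{b}+\mask$ with $\vec{a}\in S_\mask$ and $\vec{b}\in T_\mask$, then $\vec{a}$ and $\vec{b}$ agree on $\mask|_0$, i.e.\ they are $\mask$-equivalent; by $\mask$-obliviousness of $T_\mask$ we have $\vec{a}\in T_\mask$ as well, so $\vec{a}\in S_\mask\cap T_\mask$ and $\vec{u}\in (S_\mask\cap T_\mask)+\mask$. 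Since semilinear subsets of $\bbN^\Sigma$ are effectively closed under intersection (Ginsburg--Spanier), each $S_\mask\cap T_\mask$ is semilinear and computable, hence $S\cap T$ is an effectively computable masked semilinear set.

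Finally, applying \cref{lem:semilinear to JBA} to $S\cap T$ yields an automaton $\cC$ (even deterministic) with $\parikh(\jL(\cC))=S\cap T=\parikh(\jL(\cA))\cap\parikh(\jL(\cB))$, and therefore $\jL(\cC)=\jL(\cA)\cap\jL(\cB)$. The main subtlety is not the semilinear intersection itself, which is classical, but verifying that the $\infty$-components behave correctly: one has to use obliviousness crucially to ensure that the two witnesses on $\mask|_\infty$ can be reconciled, which is precisely why \cref{lem:oblivious} was proved. Everything else is bookkeeping.
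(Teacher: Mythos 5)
Your proposal is correct and follows essentially the same route as the paper's proof: reduce to Parikh images, put both sets in oblivious masked semilinear form, intersect mask-by-mask using the disjointness of the masks and $\mask$-obliviousness to reconcile the two witnesses, and then apply \cref{lem:semilinear to JBA}. No gaps; your identification of obliviousness as the crucial ingredient matches the paper exactly.
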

\begin{proof}
Consider a word $w\in \Sigma^\omega$, and observe that $w\in\jL(\cA)\cap \jL(\cB)$ if and only if $\parikh(w)\in\parikh(\jL(\cA))\cap\parikh(\jL(\cB))$. Thus, it suffices to prove that there exists automaton $\cC$ such that $\parikh(\jL(\cC))=\parikh(\jL(\cA))\cap \parikh(\jL(\cB))$.
By \cref{thm:JBA equiv semilinear} we can write $\parikh(\jL(\cA))=\bigcup_{\mask\in\maskSet}S_\mask +\mask$ and $\parikh(\jL(\cB))=\bigcup_{\mask\in\maskSet}T_\mask +\mask$. Furthermore, by \cref{lem:oblivious} we can assume that these are oblivious masked semilinear sets.

We claim that $\parikh(\jL(\cA))\cap\parikh(\jL(\cB))=\bigcup_{\mask\in\maskSet}S_\mask\cap T_\mask +\mask$. That is, we can take intersection by intersecting each pair of semilinear sets, while keeping the masks. Intuitively, this holds because every word uniquely determines the mask it can use to join the union, so intersecting the unions amounts (using obliviousness) to intersecting the set of finite-valued vectors corresponding to the finite parts of the Parikh images of words in the language.

Indeed, let $\vec{x}\in\bigcup_{\mask\in\maskSet}S_\mask\cap T_\mask +\mask$, then there exists $\mask\in\maskSet$ and $\vec{v}\in S_\mask\cap T_\mask$ such that $\vec{x}=\vec{v}+\mask$. In particular, $\vec{v}+\mask\in \parikh(\jL(\cA))$ and $\vec{v}+\mask\in \parikh(\jL(\cB))$, so $\vec{x}=\vec{v}+\mask\in \parikh(\jL(\cA))\cap\parikh(\jL(\cB))$.

Conversely, let $\vec{x}\in\parikh(\jL(\cA))\cap\parikh(\jL(\cB))$. Recall that $\vec{x}$ uniquely defines a mask $\mask$, so there exist $\vec{u}\in S_\mask$, $\vec{v}\in T_\mask$ such that $\vec{x}=\vec{u}+\mask=\vec{v}+\mask$. It follows that $\vec{u}$ and $\vec{v}$ are $\mask$-equivalent. Since $S_\mask$ and $T_\mask$ are $\mask$-oblivious, we also have that $\vec{v}\in S_\mask$ and $\vec{u}\in T_\mask$. In particular, $\vec{v}\in S_\mask\cap T_\mask$ so $\vec{x}=\vec{v}+\mask\in\bigcup_{\mask\in\maskSet}S_\mask\cap T_\mask +\mask$.

Finally, semilinear sets are closed under intersection~\cite{beierdescriptional,chistikov2016taming}. Thus, $S_\mask\cap T_\mask$ is semilinear for every $\mask$, so $\bigcup_{\mask\in\maskSet}S_\mask\cap T_\mask +\mask$ is a semilinear masked set and from \cref{lem:semilinear to JBA} there exists an automaton $\cC$ such that $\jL(\cC)=\jL(\cA)\cap (\jL(\cB)$.
\qed
\end{proof}
\begin{complexitya}[of \cref{prop:JBA intersection}]
    \label{comp:JBA_intersection}
    The description size of the intersection of semilinear sets is polynomial in the description size of the entries of the vectors and the size of the union, and singly exponential in $|\Sigma|$~\cite{beierdescriptional,chistikov2016taming}. Since the blowup in \cref{lem:JBA to semilinear,lem:oblivious} is polynomial, then the only significant blowup is \cref{lem:semilinear to JBA}, due to the construction of an automaton whose size is exponential in $|\Sigma|$. However, it is not hard to see that the two exponential blowups in $|\Sigma|$ are orthogonal, and can be merged to a singly-exponential blowup.
\end{complexitya}

\begin{proposition}
\label{prop:JBA complement}
Consider an automaton $\cA$. There exists an automaton $\cB$ such that $\jL(\cB)=\Sigma^\omega\setminus \jL(\cA)$.
\end{proposition}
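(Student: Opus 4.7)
The plan is to mimic the strategy used for intersection in \cref{prop:JBA intersection}: move to the level of Parikh images via \cref{thm:JBA equiv semilinear}, complement there, and pull back to an automaton via \cref{lem:semilinear to JBA}. Observe first that for any $w\in\Sigma^\omega$ we have $w\in\jL(\cA)$ iff $\parikh(w)\in\parikh(\jL(\cA))$, and that $\parikh(w)\in\bbNinf$. Hence it suffices to exhibit a masked semilinear set equal to $\bbNinf\setminus\parikh(\jL(\cA))$, and then invoke \cref{lem:semilinear to JBA} to obtain $\cB$.

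By \cref{lem:JBA to semilinear} and \cref{lem:oblivious}, write $\parikh(\jL(\cA))=\bigcup_{\mask\in\maskSet}S_\mask+\mask$ with each $S_\mask$ an $\mask$-oblivious semilinear subset of $\bbN^\Sigma$. The key observation is that every $\vec{x}\in\bbNinf$ has a \emph{unique} mask, namely $\mask_{\vec{x}}$ given by the set of coordinates where $\vec{x}$ is $\infty$. Consequently, $\vec{x}\in S_\mask+\mask$ is possible only for $\mask=\mask_{\vec{x}}$, so $\vec{x}\in\parikh(\jL(\cA))$ iff any (equivalently, every) $\mask_{\vec{x}}$-equivalent representative $\vec{v}\in\bbN^\Sigma$ lies in $S_{\mask_{\vec{x}}}$, using $\mask_{\vec{x}}$-obliviousness. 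Setting $T_\mask\coloneqq\bbN^\Sigma\setminus S_\mask$, this yields the clean identity
\[
\bbNinf\setminus\parikh(\jL(\cA))=\bigcup_{\mask\in\maskSet}T_\mask+\mask.
\]

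Now I would invoke closure of ordinary semilinear sets under complementation within $\bbN^\Sigma$ (a classical fact, e.g.\ from~\cite{chistikov2016taming}) to conclude that each $T_\mask$ is semilinear. Moreover, $T_\mask$ inherits $\mask$-obliviousness from $S_\mask$: if $\vec{u},\vec{v}$ are $\mask$-equivalent, then $\vec{u}\in S_\mask\iff \vec{v}\in S_\mask$, hence also $\vec{u}\in T_\mask\iff \vec{v}\in T_\mask$. Therefore $\bigcup_{\mask}T_\mask+\mask$ is an oblivious masked semilinear set, and \cref{lem:semilinear to JBA} produces a (deterministic) automaton $\cB$ with $\parikh(\jL(\cB))=\bigcup_{\mask}T_\mask+\mask=\bbNinf\setminus\parikh(\jL(\cA))$, which gives $\jL(\cB)=\Sigma^\omega\setminus\jL(\cA)$.

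The main conceptual point I expect to defend is the mask-by-mask complementation: one must argue that complementing inside $\bbNinf$ factors through the partition by masks, and this is precisely where the obliviousness from \cref{lem:oblivious} is essential. Without it, two distinct oblivious-free representatives of the same vector in $\bbNinf$ could straddle the $S_\mask$/$T_\mask$ boundary and wreck the identity above. Once that point is settled, the rest is a direct composition of previously established lemmas.
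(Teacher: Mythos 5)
Your proposal is correct and follows essentially the same route as the paper: reduce to Parikh images, use the canonical oblivious masked semilinear form, complement each $S_\mask$ within $\bbN^\Sigma$ while keeping the masks (justified by the uniqueness of a vector's mask together with $\mask$-obliviousness), and pull back via \cref{lem:semilinear to JBA}. Your additional remark that $T_\mask=\bbN^\Sigma\setminus S_\mask$ inherits $\mask$-obliviousness is a harmless (and correct) bonus that the paper does not explicitly state.
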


\begin{proof}
Observe that $w\in \Sigma^\omega\setminus \jL(\cA)$ if and only if $\parikh(w)\in \bbNinf\setminus \parikh(\jL(\cA))$. Indeed, $w\in \Sigma^\omega\setminus \jL(\cA)$ means that every permutation of $w$ is not accepted by $\cA$, i.e., the Parikh image of $w$ is not a Parikh image of any word accepted by $\cA$. Thus, it suffices to construct $\cB$ such that $\parikh(\jL(\cB))=\bbNinf\setminus \parikh(\jL(\cA))$.

By \cref{thm:JBA equiv semilinear}, we can write $\parikh(\jL(\cA))=\bigcup_{\mask\in\maskSet}S_\mask +\mask$. Moreover, by \cref{lem:oblivious} we can assume that this is an oblivious masked semilinear set.
We claim that $\bbNinf\setminus \parikh(\jL(\cA))=\bigcup_{\mask\in\maskSet} (\bbN^\Sigma\setminus S_\mask) +\mask$. That is, we can complement the union by complementing each semilinear set, while keeping the masks. 

To show this, consider a word $w\in \Sigma^\omega$ and write $\parikh(w)=\vec{x_w}+\mask_w$ where $\vec{x_w}\in \bbN^\Sigma$, as per \cref{subsec:JBA and semilinear}.

For the first direction, assume $\vec{x_w}+\mask_w \in \bbNinf\setminus \parikh(\jL(\cA))$, then $\vec{x_w}+\mask_w\notin \bigcup_{\mask\in\maskSet}S_\mask +\mask$. In particular, $\vec{x_w}\notin S_{\mask_w}$ (otherwise we would have $\vec{x_w}+\mask_{w}\in S_{\mask_w}+\mask_w$, which is part of the union). It follows that $\vec{x_w}+\mask_{w}\in (\bbN^\Sigma\setminus S_{\mask_w}) +\mask_w$, so $\parikh(w)\in \bigcup_{\mask\in\maskSet} (\bbN^\Sigma\setminus S_\mask) +\mask$.

Conversely, assume $\parikh(w)=\vec{x_w}+\mask_w\in \bigcup_{\mask\in\maskSet} (\bbN^\Sigma\setminus S_\mask) +\mask$, then observe that $\vec{x_w}+\mask_w\in (\bbN^\Sigma\setminus S_{\mask_w}) +\mask_w$. Indeed, no other part of the union has the mask $\mask_w$. 
Assume by way of contradiction that $\vec{x_w}+\mask_w\in \parikh(\jL(\cA))=\bigcup_{\mask\in\maskSet}S_\mask +\mask$, then by the same reasoning, $\vec{x_w}+\mask_w\in S_{\mask_w}+\mask_w$. Thus, there are two vectors $\vec{y}\notin S_{\mask_w}$ and $\vec{z}\in S_{\mask_w}$ such that $\vec{y}+\mask=\vec{x_w}+\mask=\vec{z}+\mask$. This contradicts the assumption that $S_{\mask_w}$ is oblivious (note that if $S_{\mask_w}$ is not oblivious, this is not a contradiction).

We conclude that $\bbNinf\setminus \parikh(\jL(\cA))=\bigcup_{\mask\in\maskSet} (\bbN^\Sigma\setminus S_\mask) +\mask$, and since semilinear sets are closed under complementation~\cite{beierdescriptional,chistikov2016taming}, the latter is also a masked semilinear set. By applying \cref{thm:JBA equiv semilinear} in the converse direction, we conclude that there exists an automaton $\cB$ such that $\parikh(\jL(\cB))=\bbNinf\setminus \parikh(\jL(\cA))$, and as mentioned above, for such $\cB$ we have $\jL(\cB)=\Sigma^\omega\setminus \jL(\cA)$.
\qed
\end{proof}

\begin{complexitya}[of \cref{prop:JBA complement}]
    \label{comp:complementation_JBA}
    The description size of the complement of a semilinear set is singly exponential, in both the degree and the description of the entries~\cite{beierdescriptional,chistikov2016taming}. We proceed similarly to \cref{comp:JBA_intersection}, ending up with a singly-exponential blowup (not just in $|\Sigma|$).
\end{complexitya}

Recall that B\"uchi automata are generally not determinizable. That is, there exists an automaton $\cA$ such that $\regL(\cA)$ is not recognizable by a deterministic B\"uchi automaton~\cite{landweber1968decision}. In stark contrast, an immediate corollary of \cref{lem:JBA to semilinear,lem:semilinear to JBA} is that jumping languages do admit determinization (also see \cref{comp:SL_to_JBA}). 
\begin{proposition}
    \label{prop:jumping_determinizable}
    For every automaton $\cA$ there exists a deterministic automaton $\cD$ such that $\jL(\cA)=\jL(\cD)$.
\end{proposition}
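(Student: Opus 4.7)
The plan is to simply chain together the two characterization lemmas of \cref{subsec:JBA and semilinear}, relying on the crucial fact that \cref{lem:semilinear to JBA} was already proved to yield a \emph{deterministic} automaton. First, given $\cA$, I would apply \cref{lem:JBA to semilinear} to effectively compute a masked semilinear set $S \subseteq \bbNinf$ representing $\parikh(\jL(\cA))$. Then I would feed $S$ into \cref{lem:semilinear to JBA} to obtain a deterministic automaton $\cD$ with $\parikh(\jL(\cD)) = S$.

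It then remains to argue that matching Parikh images of jumping languages implies equality of jumping languages. This follows immediately from \cref{def:jumping}: by definition, membership $w \in \jL(\cA)$ depends on $w$ only through $\parikh(w)$, since $\jL(\cA)$ is closed under $\sim$. Concretely, $w \in \jL(\cA)$ iff $\parikh(w) \in \parikh(\jL(\cA)) = S = \parikh(\jL(\cD))$ iff $w \in \jL(\cD)$, and we are done.

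There is no real obstacle here, as all the heavy lifting was carried out inside \cref{lem:semilinear to JBA}: the key technical ingredient is that the canonical oblivious representation (\cref{lem:oblivious}) lets us arrange the period vectors so that a distinguished letter in $\mask|_\infty$ never appears in the ``finite part'' of the construction, which is precisely what allows the transition into the $w_E^\omega$ cycle to remain deterministic. The union step that combines linear components and masks is handled by the deterministic product construction, which also preserves determinism. So the proposition genuinely is an immediate corollary, and the statement of the proof need only cite the two lemmas and the Parikh-image invariance of jumping languages.
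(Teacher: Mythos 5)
Your proposal is correct and matches the paper exactly: the paper presents this proposition as an immediate corollary of \cref{lem:JBA to semilinear,lem:semilinear to JBA}, chaining them just as you do and using the fact that membership in a jumping language depends only on the Parikh image. Your additional remarks on why \cref{lem:semilinear to JBA} preserves determinism (the canonical oblivious form and the product construction) accurately reflect the paper's argument there.
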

In particular, this means that additional acceptance conditions (e.g., Rabin, Streett, Muller and parity) cannot add expressiveness to the model.
\begin{remark}
    \label{rmk:jumping_not_really_det}
    Discussing determinization of jumping languages is slightly misleading: the definition of acceptance in a jumping language asks that \emph{there exists} some permutation that is accepted by the automaton. This existential quantifier can be thought of as ``semantic'' nondeterminism. Thus, in a way, jumping languages are inherently nondeterministic, regardless of the underlying syntactic structure.     
\end{remark}

Finally, as is the case for jumping languages over finite words~\cite{meduna2012jumping,fernau2015jumping,fernau2017characterization}, jumping languages and $\omega$-regular languages are incomparable. Indeed, the automata in~\cref{fig:intersection_NBW_fails} are examples of B\"uchi automata whose languages are not permutation-invariant, and are hence not jumping languages, and in \cref{fig:JBA_no_equiv_NBW} we demonstrate that the converse also does not hold.
\begin{figure}[ht]
    \centering
    \begin{tikzpicture}[auto,node distance=2.8cm,scale=1]
    \tikzset{every state/.style={minimum size=15pt, inner sep=3}};
    \node (q0) [initial, state, initial text = {}] {$q_{0}$};
    \node (q1) [state,inner sep=0pt] at (2,0) {$q_{1}$};
    \node (q2) [accepting, state,inner sep=0pt] at (-2,0) {$q_{2}$};
    \path [-stealth]
    (q0) edge [bend left=20] node {$a$} (q1)
    (q1) edge [bend left=20] node [above] {$b$} (q0)
    (q0) edge [bend left=20] node [above] {$c$} (q2)
    (q2) edge [loop left] node {$c$} (q2);
\end{tikzpicture}\qquad\qquad
    \caption{An automaton $\cA$ for which $\jL(\cA)=\{u\cdot c^\omega\mid u\in \{a,b,c\}^* \wedge \numl{u}{a}=\numl{u}{b}\}$, which is not $\omega$-regular by simple pumping arguments.}
    \label{fig:JBA_no_equiv_NBW}
\end{figure}

\subsection{Jumping Languages - Algorithmic Properties}
\label{subsec:algorithms_JBA}
We turn our attention to algorithmic properties of jumping languages. We first notice that given an automaton $\cA$ we have that $\jL(\cA)\neq \emptyset$ if and only if $\regL(\cA)\neq \emptyset$. Thus, non-emptiness of $\jL(\cA)$ is \nlc, as it is for B\"uchi automata~\cite{emerson1985modalities,kupferman2018automata}.
Next, we consider the standard decision problems:
\begin{itemize}
    \item \textbf{Containment}: given automata $\cA,\cB$, is $\jL(\cA)\subseteq \jL(\cB)$?
    \item \textbf{Equivalence}: given automata $\cA,\cB$, is $\jL(\cA)= \jL(\cB)$?
    \item \textbf{Universality}: given an automaton $\cA$, is $\jL(\cA)=\Sigma^\omega$?
\end{itemize}
We show that all of these problems reduce to their analogues in finite-word jumping automata. 
\begin{theorem}
    \label{thm:alg_JBA}
    Containment, Equivalence and Universality for jumping languages are $\conp$-complete (for fixed size alphabet).
\end{theorem}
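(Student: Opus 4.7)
The plan is to reduce each of the three decision problems to a constant number of instances (for fixed $|\Sigma|$) of the analogous problem on finite-word jumping automata, which is known to be $\conp$-complete for fixed alphabet via the correspondence with semilinear-set problems in fixed dimension~\cite{beierdescriptional,chistikov2016taming}. Since $|\maskSet|=2^{|\Sigma|}-1$ is constant, a constant number of polynomial-size $\conp$ queries stays in $\conp$.

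For the upper bound, I would use \cref{thm:JBA equiv semilinear} together with \cref{lem:oblivious} to write $\parikh(\jL(\cA))=\bigcup_{\mask\in\maskSet}S_\mask+\mask$ and $\parikh(\jL(\cB))=\bigcup_{\mask\in\maskSet}T_\mask+\mask$ as oblivious masked semilinear sets. Because every $\vec{v}\in\bbNinf$ uniquely determines its mask, containment, equivalence, and universality decompose per mask: for instance $\jL(\cA)\subseteq \jL(\cB)$ iff $S_\mask\subseteq T_\mask$ for every $\mask$, and $\jL(\cA)=\Sigma^\omega$ iff $S_\mask$ equals the full projection onto the $\mask|_0$-coordinates for every $\mask$. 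Following the proof of \cref{lem:JBA to semilinear}, for each mask I would extract a polynomial-size finite-word NFA $\cA_\mask$ (respectively $\cB_\mask$) for the regular language $\bigcup_{i:\mask\in I(T_i)}S_i\cdot T_i^*$, and further attach $\mask|_\infty$-self-loops at every state so that $\parikh(\jfinL(\cA_\mask))=S_\mask$ is already $\mask$-oblivious. The per-mask check then becomes a standard finite-word jumping containment, equivalence, or universality query.

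For the lower bound, I would reduce from the corresponding finite-word jumping problems, which are $\conp$-hard for fixed alphabet. Given a finite-word automaton $\cA$ over $\Sigma$, pick a fresh letter $\#\notin\Sigma$ and construct an infinite-word automaton $\cA'$ over $\Sigma\cup\{\#\}$ by attaching to every accepting state of $\cA$ a $\#$-transition into a new accepting state that self-loops on $\#$. Then $\regL(\cA')=\finL(\cA)\cdot\{\#\}^\omega$, from which one verifies that $w\in\jL(\cA')$ iff $\numl{w}{\#}=\infty$, every $\sigma\in\Sigma$ occurs only finitely often in $w$, and the $\Sigma$-projection of $w$ lies in $\jfinL(\cA)$. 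Containment and equivalence reductions between $(\cA',\cB')$ and $(\cA,\cB)$ then follow directly. For universality, I would further union $\cA'$ with a small automaton accepting every infinite word in which some $\sigma\in\Sigma$ appears infinitely often, so that universality of $\jL(\cA')$ over $(\Sigma\cup\{\#\})^\omega$ is equivalent to $\jfinL(\cA)=\Sigma^*$.

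The main obstacle is the obliviousness bookkeeping in the per-mask step: one must verify that the NFAs $\cA_\mask,\cB_\mask$ really represent the $\mask$-slices of $\parikh(\jL(\cA)),\parikh(\jL(\cB))$ in the oblivious sense, so that $\jfinL(\cA_\mask)\subseteq \jfinL(\cB_\mask)$ faithfully reflects $S_\mask+\mask\subseteq T_\mask+\mask$. With \cref{lem:oblivious} and the $\mask|_\infty$-self-loop augmentation this becomes mechanical, and the fixed $|\Sigma|$ bounds the number of queries by a constant, yielding the claimed $\conp$ bound.
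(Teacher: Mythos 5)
Your proposal is correct and follows essentially the same route as the paper: decompose $\parikh(\jL(\cA))$ and $\parikh(\jL(\cB))$ into oblivious masked semilinear sets via \cref{thm:JBA equiv semilinear} and \cref{lem:oblivious}, reduce each per-mask check to a finite-word jumping (equivalently, semilinear) query that is $\conp$-complete for fixed alphabet, and obtain hardness by the $L\cdot\#^\omega$ reduction. Your extra care with the $\mask|_\infty$-self-loop augmentation and the universality lower bound (unioning with the words in which some $\sigma\in\Sigma$ occurs infinitely often) only fills in details the paper leaves implicit.
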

\begin{proof}
We start with containment. by \cref{thm:JBA equiv semilinear,lem:oblivious}, given automata $\cA,\cB$ we obtain canonical oblivious masked semilinear representation:
$\parikh(\jL(\cA))=\bigcup_{\mask\in\maskSet}S_\mask +\mask$ and $\parikh(\jL(\cB))=\bigcup_{\mask\in\maskSet}T_\mask +\mask$.
Then, we have $\jL(\cA)\subseteq \jL(\cB)$ if and only if $\parikh(\jL(\cA))\subseteq \parikh(\jL(\cB))$, which in turn happens if and only if for every mask $\mask$ it holds that $S_\mask+\mask\subseteq T_\mask+\mask$. We claim that the latter holds if and only if $S_\mask\subseteq T_\mask$. 

Indeed, clearly if $S_\mask\subseteq T_\mask$ then  $S_\mask+\mask\subseteq T_\mask+\mask$. Conversely, assume  $S_\mask+\mask\subseteq T_\mask+\mask$ and let $\vec{u}\in S_\mask$, so $\vec{u}+\mask\in T_\mask+\mask$. Then, there exists $\vec{v}\in T_\mask$ such that $\vec{u}+\mask=\vec{v}+\mask$, but since $T_\mask$ is $\mask$-oblivious, this means that $\vec{u}\in T_\mask$.

Thus, deciding whether $\jL(\cA)\subseteq \jL(\cB)$ amounts to deciding whether $S_\mask\subseteq T_\mask$ for every $\mask$. Since the latter is decidable, we get decidability of containment. Moreover, we can obtain in polynomial time nondeterministic automata $\cC^\mask_1,\cC^\mask_2$ corresponding to $S_\mask$ and $T_\mask$ (see \cref{comp:SL_to_oblivious,comp:SL_to_JBA}), so it is enough to decide if $\jfinL(\cC^\mask_1)\subseteq \jfinL(\cC^\mask_2)$ for all $\mask$. The latter problem is \conp-complete~\cite{kopczynski2010parikh,kopczynski2015complexity} (for fixed-size alphabet), and since we can start by nondeterministically guessing $\mask$, we retain this complexity. Moreover, \conp hardness trivially follows (by e.g., reducing a given language $L$ to $L\cdot \#^\omega$ for a new symbol $\#$).

Similar arguments hold for equivalence and universality, concluding the proof.
\qed
\end{proof}

\section{Fixed-Window Jumping Languages}
\label{sec:fixed window}
Recall from \cref{def:jumping} that given an automaton $\cA$ and $k\in \bbN$, the language $\kwinL(\cA)$ consists of the words that have a $k$-window permutation that is accepted by $\cA$. 

Since $k$ is fixed, we can capture $k$-window permutations using finite memory. Hence, as we now show, $\kwinL(\cA)$ is $\omega$-regular.
\begin{theorem}
    \label{thm:WJW to NBW}
    Consider an automaton $\cA$, then for every $k\in \bbN$ there exists a B\"uchi automaton $\cB_k$ such that  $\kwinL(\cA)=\regL(\cB_k)$.
\end{theorem}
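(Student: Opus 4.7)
The plan is to construct $\cB_k$ via a block-based simulation of $\cA$, exploiting that $k$ is fixed. I proceed in three stages: (i) lift $\cA$ to an automaton $\widehat{\cA}$ over the alphabet $\Sigma^k$ that reads the input in blocks of size $k$; (ii) close $\widehat{\cA}$ under intra-block permutations to obtain an automaton $\widetilde{\cA}$; and (iii) un-block $\widetilde{\cA}$ back to an \NBA $\cB_k$ over $\Sigma$ by adding a buffer of partial blocks.

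For stage (i), I take $\widehat{\cA} = \tup{\Sigma^k,\, Q \times \{0,1\},\, \widehat{\delta},\, Q_0 \times \{0\},\, Q \times \{1\}}$, where $\widehat{\delta}((q, \cdot), \sigma_1 \cdots \sigma_k)$ contains $(q', v)$ precisely when $\cA$ admits a finite run $q = q_0, q_1, \ldots, q_k = q'$ on $\sigma_1 \cdots \sigma_k$, with $v = 1$ iff some $q_i$ (for $1 \le i \le k$) lies in $\alpha$. A pigeonhole argument then shows that, viewed as a B\"uchi automaton over $\Sigma^k$, $\widehat{\cA}$ accepts $\beta_1 \beta_2 \cdots$ (with $\beta_i \in \Sigma^k$) iff $\cA$ accepts the concatenation $\beta_1 \cdot \beta_2 \cdots$: any run of $\cA$ visits $\alpha$ infinitely often iff infinitely many blocks contain an $\alpha$-visit.

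For stage (ii), I let $\widetilde{\cA}$ be $\widehat{\cA}$ with transitions enlarged to $\widetilde{\delta}((q, \cdot), \beta) = \bigcup_{\beta' \sim \beta} \widehat{\delta}((q, \cdot), \beta')$; that is, on reading a block $\beta \in \Sigma^k$ the automaton first nondeterministically permutes it and then simulates $\cA$. Unwinding the definition of $\simkw$, we obtain $\regL(\widetilde{\cA}) = \{\beta_1 \beta_2 \cdots : \exists\, \beta'_i \sim \beta_i \text{ for all } i \text{ with } \beta'_1 \cdot \beta'_2 \cdots \in \regL(\cA)\}$, which is exactly $\kwinL(\cA)$ viewed as a language of blocked $\omega$-words over $\Sigma^k$.

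For stage (iii), I un-block $\widetilde{\cA}$ by a standard buffering construction: states of $\cB_k$ are pairs $(s, u)$ where $s$ is a state of $\widetilde{\cA}$ and $u \in \Sigma^{<k}$ is the prefix of the current block read so far. On input letter $\sigma$ with $|u| < k-1$, $\cB_k$ extends the buffer to $u\sigma$; on the $k$-th letter it fires a transition of $\widetilde{\cA}$ on $u\sigma$ and resets the buffer to $\epsilon$. Accepting states are the $(s, \epsilon)$ with $s$ accepting in $\widetilde{\cA}$, so B\"uchi runs of $\cB_k$ are in bijection with B\"uchi runs of $\widetilde{\cA}$ on the blocked input, yielding $\regL(\cB_k) = \kwinL(\cA)$. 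The main subtlety throughout is preserving the B\"uchi condition: the bit introduced in $\widehat{\cA}$ is what makes everything go through, since it absorbs per-block $\alpha$-visits into a single flag that commutes with both the permutation closure in (ii) (which only rearranges intermediate states within a block and therefore does not affect \emph{whether} $\alpha$ is hit inside it) and the un-blocking in (iii) (where exactly one $(s,\epsilon)$-state is visited per block).
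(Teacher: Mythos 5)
Your construction is correct and is essentially the paper's: both buffer the current $k$-window, fire at each window boundary a nondeterministic simulation of $\cA$ on an arbitrary permutation of that window, and record intra-window visits to $\alpha$ via a flag (your bit, the paper's $Q_\alpha$ copy) so that the B\"uchi condition is preserved. The only substantive difference is that you store the literal partial block $u\in\Sigma^{<k}$ where the paper stores only its Parikh image, costing $O(|\Sigma|^{k})$ rather than $O(k^{|\Sigma|})$ buffer states, which affects the complexity analysis but not correctness.
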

\begin{proof}[Sketch]
Intuitively, we construct $\cB_k$  so that it stores in a state the Parikh image of $k$ letters, then nondeterministically simulates $\cA$ on all words with the stored Parikh image, while noting when an accepting state may have been traversed. We illustrate the construction for $k=2$ in \cref{fig:example_WJW_to_NBW}. The details are in \cref{apx:WJW to NBW}.  
\qed 
\end{proof}
\begin{complexitya}[of \cref{thm:WJW to NBW}]
    \label{comp:WJW to NBW}
    If $\cA$ has $n$ states, then $\cB$ has at most $k^{|\Sigma|}\times n+n$ states. Moreover, computing $\cB$ can be done effectively, by keeping for every Parikh image the set of states that are reachable (resp. reachable via an accepting state).
\end{complexitya}
\begin{figure}[ht]
    \centering
    \begin{tikzpicture}[auto,node distance=1.5cm,scale=1]
    \node (D) [draw=none] at (-2,1) {$\cD$:};
    \node (q0) [initial above, state, initial text = {},inner sep=0pt, minimum size=15pt] at (0,0) {$q_{0}$};
    \node (q1) [state,inner sep=0pt, minimum size=15pt] at (-1.5,0) {$q_{1}$};
    \node (q2) [state,inner sep=0pt, minimum size=15pt] at (1.5,0) {$q_{2}$};
    \node (q3) [accepting, state,inner sep=0pt, minimum size=15pt] at (1.5,1.2) {$q_{3}$};
    \path [-stealth]
    (q0) edge [bend left=20] node {$a$} (q1)
    (q1) edge [bend left=20] node {$b$} (q0)
    (q0) edge [bend left=20] node {$b$} (q2)
    (q2) edge [bend left=20] node {$b$} (q0)
    (q2) edge node {$a$} (q3)
    (q3) edge [loop left] node {$b$} (q3);
\end{tikzpicture}\qquad
\begin{tikzpicture}[auto,node distance=2.8cm,scale=1]
    \node (B2) [draw=none] at (-1,0.3) {$\cB_2$:};
    \scriptsize
    \node (q0) [initial, state, initial text = {},inner sep=0pt, minimum size=15pt] at (0,0) {$q_{0}\begin{pmatrix}
        0\\0
    \end{pmatrix}$};
    \node (q0_10) [state,inner sep=0pt, minimum size=15pt] at (2,0.7) {$q_{0}\begin{pmatrix}
        1\\0
    \end{pmatrix}$};
    \node (q0_01) [state,inner sep=0pt, minimum size=15pt] at (2,-0.7) {$q_{0}\begin{pmatrix}
        0\\1
    \end{pmatrix}$};
    \node (q3) [accepting, state,inner sep=0pt, minimum size=25pt] at (4,-0.7) {$q_{3}$};
    \node (q3_01) [state,inner sep=0pt, minimum size=15pt] at (4,0.7) {$q_{3}\begin{pmatrix}
        0\\1
    \end{pmatrix}$};
    \path [-stealth]
    (q0) edge [bend left=20] node {$a$} (q0_10)
    (q0) edge [bend left=5] node[pos=0.7] {$b$} (q0_01)
    (q0_10) edge [bend left=5] node[pos=0.3] {$b$} (q0)
    (q0_10) edge [bend left=5] node[pos=0.2] {$b$} (q3)
    (q0_01) edge [bend left=20] node {$a,b$} (q0)
    (q0_01) edge [bend right=10] node {$a$} (q3)
    (q3) edge [bend left=10] node {$b$} (q3_01)
    (q3_01) edge [bend left=10] node {$b$} (q3);
\end{tikzpicture}
      \caption{An automaton $\cD$, and the construction $\cB_2$ for $\kwinL[2](\cD)$ as per~\cref{thm:WJW to NBW}.}
    \label{fig:example_WJW_to_NBW}
\end{figure}

A converse of \cref{thm:WJW to NBW} also holds, in the following sense: we say that a language $L\subseteq \Sigma^\omega$ is $k\boxplus$ invariant if $w\in L\iff w'\in L$ for every $w\simkw w'$. The following result follows by definition.
\begin{proposition}
    \label{prop:NBW_k_inv_WJW}
    Let $\cA$ be an automaton such that $\regL(\cA)$ is $k\boxplus$-invariant, then $\regL(\cA)=\kwinL(\cA)$.
\end{proposition}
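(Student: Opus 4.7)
The plan is to prove both inclusions directly from the definitions, using reflexivity of $\simkw$ for one direction and the $k\boxplus$-invariance hypothesis for the other. No machinery is needed beyond unfolding \cref{def:jumping} and the definition of $k\boxplus$-invariance introduced just before the proposition.

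For the inclusion $\regL(\cA)\subseteq \kwinL(\cA)$, I would take an arbitrary $w\in \regL(\cA)$ and note that $w\simkw w$ trivially (writing $w=x_1x_2\cdots$ as its own decomposition into windows of size $k$, each window being a permutation of itself). Hence $w$ is witnessed by itself to lie in $\kwinL(\cA)$.

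For the converse inclusion $\kwinL(\cA)\subseteq \regL(\cA)$, I would take $w\in \kwinL(\cA)$ and use the definition to obtain some $w'\simkw w$ with $w'\in \regL(\cA)$. By the assumption that $\regL(\cA)$ is $k\boxplus$-invariant, the biconditional $w\in \regL(\cA)\iff w'\in \regL(\cA)$ holds, so $w\in \regL(\cA)$.

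There is no real obstacle here: the statement is essentially a sanity check that the two notions coincide under the invariance hypothesis, and the argument is a one-line unfolding in each direction. The only minor care needed is to verify that $\simkw$ is reflexive, which is immediate from the definition since any word is (trivially) a $k$-window permutation of itself.
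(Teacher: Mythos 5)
Your proof is correct and matches the paper's intent exactly: the paper states that this proposition ``follows by definition,'' and your two-inclusion unfolding (reflexivity of $\simkw$ for one direction, $k\boxplus$-invariance for the other) is precisely that argument spelled out.
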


\cref{thm:WJW to NBW,prop:NBW_k_inv_WJW} yield that $k$-window jumping languages are closed under the standard operations under which $\omega$-regular languages are closed (union, intersection, complementation), as these properties retain $k\boxplus$ invariance. Similarly, all algorithmic problems can be reduced to the same problems on B\"uchi automata (by \cref{comp:WJW to NBW}, assuming $|\Sigma|$ is fixed).

Notice that the construction in \cref{thm:WJW to NBW} yields a nondeterministic B\"uchi automaton. This is not surprising in light of \cref{rmk:jumping_not_really_det}. It does raise the question of whether we can find a deterministic B\"uchi automaton for $\kwinL(\cD)$ for a \emph{deterministic} automaton $\cD$. In~\cref{prop:kwin_det_no_DBW}, we prove that this is not the case. That is, even deterministic $k$-window jumping has inherent nondeterminism.
\begin{proposition}
    \label{prop:kwin_det_no_DBW}
    There is no nondeterministic B\"uchi automaton whose language is $\kwinL(\cD)$ for $k\ge 2$ and $\cD$ as in \cref{fig:example_WJW_to_NBW}.
\end{proposition}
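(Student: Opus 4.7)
Before any technical work I would flag an apparent inconsistency: the statement as worded asserts non-existence of a \emph{nondeterministic} B\"uchi automaton, yet~\cref{thm:WJW to NBW} explicitly constructs such an NBW for $\kwinL(\cA)$ of every automaton $\cA$ and every $k$ (and in particular for this $\cD$ and $k\ge 2$), making the literal reading self-contradictory with the paper. Combined with the label \texttt{prop:kwin\_det\_no\_DBW} and the preceding paragraph's question about whether a \emph{deterministic} B\"uchi automaton exists, I will treat the word ``nondeterministic'' as a typo for ``deterministic'' and sketch the proof that no DBW captures $\kwinL[k](\cD)$; I return at the end to why the literal NBW reading cannot be rescued.

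I focus on $k=2$; the argument lifts to $k\ge 2$ by rescaling alignment. The first step is to pin down the language. Because $\regL(\cD)=(ab+bb)^*\cdot ba\cdot b^\omega$ and the 2-window partition aligns perfectly with the $(ab+bb)$ block structure, one shows
\[
\kwinL[2](\cD)=\bigl\{w\in\{a,b\}^\omega: \text{every 2-window of } w \text{ has at most one } a,\text{ and } w \text{ has at least one but finitely many } a\text{s}\bigr\}.
\]
The ``finitely many $a$s'' conjunct is a coB\"uchi condition and, just as in the classical non-DBW example $\{w:a\text{ occurs finitely often}\}$, is what will defeat every deterministic B\"uchi candidate.

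The second step is a standard DBW-pumping argument, executed carefully so that each insertion respects the 2-window partition. Assume toward contradiction that $\cB=\tup{\{a,b\},Q,\delta,q_0,F}$ is a DBW for $\kwinL[2](\cD)$. Since $ab\cdot b^\omega\in\kwinL[2](\cD)$, the unique run of $\cB$ visits $F$ infinitely often, so there is an even $n_1$ with $\delta(q_0,ab\cdot b^{n_1})\in F$. The word $ab\cdot b^{n_1}\cdot ab\cdot b^\omega$ is again in $\kwinL[2](\cD)$ (the fact that $n_1$ is even guarantees the second $ab$ falls into a fresh 2-window, so the at-most-one-$a$-per-window and finitely-many-$a$s conditions are preserved), and so continuing the run from $\delta(q_0,ab\cdot b^{n_1}\cdot ab)$ on $b^\omega$ reaches $F$ again after an even offset $n_2$. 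Iterating this procedure I obtain even $n_1,n_2,\ldots$ and build $w=ab\cdot b^{n_1}\cdot ab\cdot b^{n_2}\cdot ab\cdots$; by construction the unique run of $\cB$ on $w$ visits $F$ infinitely often, so $\cB$ accepts $w$, yet $w$ has infinitely many $a$s and therefore $w\notin\kwinL[2](\cD)$, a contradiction.

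The main obstacle is the alignment bookkeeping: each inserted $ab$ must begin at an odd position to occupy a fresh 2-window, which forces every $n_i$ to be even. For general $k\ge 2$ the analogous template uses insertions $a\cdot b^{k-1}$ with $n_i$ a multiple of $k$. Finally, if ``nondeterministic'' in the proposition is actually intended, then this approach cannot be adapted, because $\kwinL[k](\cD)$ is $\omega$-regular via~\cref{thm:WJW to NBW}; in that case the stated claim would be false, so the sketch above addresses the proposition in the only reading under which it is true.
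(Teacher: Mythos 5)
You are right to read ``nondeterministic'' as a typo for ``deterministic'': the surrounding text, the label, and the paper's own proof (which assumes a \emph{deterministic} $\cA$ with $\regL(\cA)=\kwinL[2](\cD)$ and derives a contradiction) all confirm that the claim is about DBWs, and the literal NBW reading is indeed refuted by \cref{thm:WJW to NBW}. Your argument is then essentially the paper's: the same Landweber-style pumping against a hypothetical DBW, seeded with a word of the form $ab\cdot b^\omega$ (the paper uses $(ba)(bb)^\omega$, resp.\ $bab^\omega$ or $ab^\omega$ for general $k$) and repeatedly inserting a window-aligned $a$-block to force acceptance of a word with infinitely many $a$'s. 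One small slip: you assert there is an \emph{even} $n_1$ with $\delta(q_0,ab\cdot b^{n_1})\in F$, i.e.\ that the run lands in $F$ exactly at a window boundary. That can fail (the $b$-loop could visit $F$ only at odd offsets); what you actually need, and what the paper's proof uses, is an even $n_1$ large enough that $F$ is visited \emph{somewhere} while reading $ab\cdot b^{n_1}$ --- each inserted segment then contributes at least one visit to $F$, which suffices for B\"uchi acceptance of the limit word. With that repair the argument goes through. A minor point in your favor: your uniform treatment of general $k$ (insert $a\cdot b^{k-1}$ at positions divisible by $k$, and note that every $k$-window with $k\ge 2$ contains both an even and an odd position for placing the $a$) avoids the paper's separate even/odd case analysis, though you only sketch it.
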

\begin{proof}
    Consider the deterministic automaton $\cD$ in \cref{fig:example_WJW_to_NBW}. 
    We start by showing that $\regL(\cB_2)=\kwinL[2](\cD)$ cannot be recognized by a deterministic B\"uchi automaton. Below, we show this \emph{mutatis mutandis} for every even $k$, and with slightly more effort -- for all $k\ge 2$.

    By way of contradiction, consider a deterministic automaton $\cA$ such that $\regL(\cA)=\kwinL[2](\cD)$, then $\cA$ accepts $(ba)(bb)^\omega$, and therefore its run passes through an accepting state after some $(ba)(bb)^{m_1}$. Then, $\cA$ also accepts $(ba)(bb)^{m_1}(ab)(bb)^\omega$, so again there is $m_2$ such that $\cA$ reaches an accepting state after $(ba)(bb)^{m_1}(ab)(bb)^{m_2}$, we consider the word $(ba)(bb)^{m_1}(ab)(bb)^{m_2}(ab)(bb)^\omega$ and proceed in this fashion to construct an accepting run on a word with infinitely many $(ab)$, which is not in $\kwinL[2](\cD)$.
   
We now turn to show the same for any $k\ge 2$.

Assume by way of contradiction that there exists $k\in\bbN, k\ge 2$ such that $\kwinL(\cD)=\regL(A)$ for a deterministic automaton $\cA$. $k$ can be either odd or even.

We first assume that $k$ is even and consider the word $w=bab^\omega$. $w\in\kwinL(\cD)$ as there is $w'\simkw w$ that is accepted by $\cD$, namely $w'=w$. Since $bab^\omega\in\regL(\cA)$, the unique run of $\cA$ on it is accepting, and let $m_1$ be an index such that $|bab^{m_1}|$ is divisible by $k$ and $\alpha$ was visited at least once while reading $bab^{m_1}$. Now, consider the word $w=bab^{m_1}bab^\omega$. $w\in\kwinL(\cD)$ as there is $w'\simkw w$ that is accepted by $\cD$, and that is $w'=abb^{m_1}bab^\omega$. $bab^{m_1}bab^\omega\in\regL(\cA)$, then the unique run of $\cA$ on it is accepting, and let $m_2$ be an index such that $|bab^{m_1}bab^{m_2}|$ is divisible by $k$ and $\alpha$ was visited at least twice while reading $bab^{m_1}bab^{m_2}$. Consider the word $w=bab^{m_1}bab^{m_2}bab^\omega$, it is also in $\kwinL(\cD)$ because $w'=abb^{m_1}abb^{m_2}bab^\omega$ is accepted by $\cD$. Following this fashion, we can construct an infinite word $w=bab^{m_1}bab^{m_2}bab^{m_3}...$ such that  $\cA$'s run on it visits $\alpha$ infinitely many often despite the fact that $w\notin\kwinL(\cD)$. 

We then assume that $k\ge 2$ is odd and consider the word $w=ab^\omega$. We have that $w\in\kwinL(\cD)$ as there is $w'\simkw w$ that is accepted by $\cD$, namely $w'=bab^\omega$. Since $ab^\omega\in\regL(A)$, the unique run of $\cA$ on it is accepting, and let $m_1$ be an index $|ab^{m_1}|$ is divisible by $k$ and $\alpha$ was visited at least once while reading $ab^{m_1}$. Now, consider the word $w=ab^{m_1}ab^\omega$. Again $w\in\kwinL(\cD)$ as there is $w'\simkw w$ that is accepted by $\cD$, namely $w'=w$ (with the accepting run $\rho=q_0q_1(q_0q_2)^{k-2}q_3^\omega$). Since  $ab^{m_1}ab^\omega\in\regL(A)$, the unique run of $\cA$ on it is accepting, and let $m_2$ be an index such that $|ab^{m_1}ab^{m_2}|$ is divisible by $k$ and $\alpha$ was visited at least twice while reading $ab^{m_1}ab^{m_2}$. Similarly we can show that $w=ab^{m_1}ab^{m_2}ab^\omega\in\kwinL(\cD)$, and we can proceed and construct an infinite word $w=ab^{m_1}ab^{m_2}ab^{m_3}...$ such that $\cA$'s run on it visits $\alpha$ infinitely many often despite the fact that $w\notin\kwinL(\cD)$. 
It follows that there isn't a deterministic automaton $\cA$ such that $\kwinL(\cD)=\regL(A)$ for any $k>1$.
\qed
\end{proof}

\section{Existential-Window Jumping Languages}
\label{sec:exists_jumping}
Recall from \cref{def:jumping} that given an automaton $\cA$, the language $\ewinL(\cA)$ contains the words that have a $k$-window permutation that is accepted in $\cA$ for some $k$, i.e., $\ewinL(\cA)=\bigcup_{k\in \bbN}\kwinL(\cA)$. We briefly establish some expressiveness properties of $\ewinL(\cA)$. 

Perhaps surprisingly, we show that $\exists\boxplus$ languages are strictly more expressive than Jumping languages.
We start by showing that every jumping language can be defined as the $\exists\boxplus$ language of some automaton.
\begin{theorem}
\label{thm:JBW_is_EWJ}
    Let $\cA$ be an automaton, then there exists an automaton $\cB$ such that $\ewinL(\cB)=\jL(\cA)$. 
\end{theorem}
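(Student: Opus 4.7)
My plan is to build $\cB$ componentwise from a canonical masked semilinear representation of $\parikh(\jL(\cA))$. Applying \cref{thm:JBA equiv semilinear} and \cref{lem:oblivious}, I would write $\parikh(\jL(\cA)) = \bigcup_{\mask \in \maskSet} S_\mask + \mask$ in canonical oblivious form, so each $S_\mask$ decomposes into linear sets $\lin(\vec{b}, Q \cup E_\mask)$ in which both $\vec{b}$ and every $\vec{p} \in Q$ vanish on $\mask|_\infty$. Since $\kwinL(\cB_1 \cup \cB_2) = \kwinL(\cB_1) \cup \kwinL(\cB_2)$ for every $k$ -- and hence $\ewinL$ distributes over finite unions of automata -- it suffices to construct, for each linear component $\lin(\vec{b}, Q \cup E_\mask) + \mask$, an automaton $\cB_{\mask, \vec{b}, Q}$ whose $\exists$-window language equals the set of infinite words with Parikh image in that component; then $\cB$ is the disjoint union of all these.

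The key idea -- differing from \cref{lem:semilinear to JBA} -- is to replace the rigid periodic tail $w_E^\omega$ of that construction with a permissive B\"uchi language. Concretely, let $w_{\vec{b}}$ and $w_{\vec{p_1}}, \ldots, w_{\vec{p_n}}$ (for $Q = \{\vec{p_1}, \ldots, \vec{p_n}\}$) be fixed representative words, all over the alphabet $\mask|_0$. Let $L_\mask$ denote the B\"uchi-recognizable language of infinite words over $\mask|_\infty$ in which every letter of $\mask|_\infty$ occurs infinitely often -- a finite intersection of standard ``infinitely often'' B\"uchi conditions. I would take $\cB_{\mask, \vec{b}, Q}$ to be a B\"uchi automaton with $\regL(\cB_{\mask, \vec{b}, Q}) = w_{\vec{b}} \cdot (w_{\vec{p_1}} + \cdots + w_{\vec{p_n}})^* \cdot L_\mask$. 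A direct computation then yields $\parikh(\regL(\cB_{\mask, \vec{b}, Q})) = \lin(\vec{b}, Q) + \mask = \lin(\vec{b}, Q \cup E_\mask) + \mask$, the last equality holding because $+\mask$ absorbs all $E_\mask$ contributions.

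For the $(\subseteq)$ direction, since $\simkw$ preserves Parikh images, every $w \in \ewinL(\cB)$ satisfies $\parikh(w) \in \parikh(\regL(\cB)) = \parikh(\jL(\cA))$, and hence $w \in \jL(\cA)$ (as jumping languages are Parikh-closed). For the harder $(\supseteq)$ direction, given $w \in \jL(\cA)$, let $\mask^*$ be the unique mask of $\parikh(w)$ and fix a component with $\parikh(w) \in \lin(\vec{b}, Q \cup E_{\mask^*}) + \mask^*$. Pick $u \in w_{\vec{b}} \cdot (w_{\vec{p_1}} + \cdots + w_{\vec{p_n}})^*$ whose Parikh image matches the $\mask^*|_0$-coordinates of $\parikh(w)$, and let $M$ be the position of the last $\mask^*|_0$ letter of $w$ -- well-defined and finite since $\parikh(w)$ is finite on $\mask^*|_0$ (the degenerate case $u = \epsilon$ reduces directly to $w \in L_{\mask^*} \subseteq \regL(\cB)$). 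Setting $k = M$, I would construct $w' = u \cdot v \cdot w[k+1..]$, where $v$ is any word over $\mask^*|_\infty$ whose Parikh image matches that of the $\mask^*|_\infty$-letters appearing in $w[1..M]$. Then $|u \cdot v| = k$ and $u \cdot v \sim w[1..M]$, while all later $k$-windows agree with those of $w$, so $w \simkw w'$. Finally, $v \cdot w[k+1..]$ is an infinite word over $\mask^*|_\infty$ with every letter occurring infinitely often (because $\parikh(w)$ is $\infty$ on every $\mask^*|_\infty$ coordinate and $v$ is finite), so $v \cdot w[k+1..] \in L_{\mask^*}$ and $w' \in \regL(\cB_{\mask^*, \vec{b}, Q})$.

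The main obstacle is precisely the $(\supseteq)$ argument: a naive reuse of the automaton from \cref{lem:semilinear to JBA} with its rigid $w_E^\omega$ tail would \emph{fail}, because the $\mask|_\infty$ letters in $w$ may be spread with arbitrarily long gaps (e.g., $a b^{n_1} a b^{n_2} \cdots$ with $n_i \to \infty$), so no single window size $k$ can permute such a tail into $w_E^\omega$. Replacing $w_E^\omega$ with the permissive B\"uchi tail $L_\mask$ is exactly what makes the construction work: it allows leaving the entire infinite tail of $w$ completely untouched, confining all rearrangement to a single window of size $M$ covering the finite prefix.
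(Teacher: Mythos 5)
Your proposal is correct and takes essentially the same route as the paper's proof: decompose $\parikh(\jL(\cA))$ into canonical oblivious masked linear components, replace the rigid $w_E^\omega$ tail of \cref{lem:semilinear to JBA} by the permutation-invariant B\"uchi language $L_\mask$ of words over $\mask|_\infty$ in which every letter occurs infinitely often, get the easy inclusion from Parikh images, and get the hard inclusion by choosing a single window large enough to cover the finite prefix. Your write-up of the hard direction (taking $k=M$ exactly and explicitly displacing the prefix's $\mask|_\infty$-letters into $v$) is if anything slightly more careful than the paper's, but it is the same argument.
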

\begin{proof}
     Intuitively, we modify the construction of the jumping automaton in~\cref{lem:semilinear to JBA} so that it produces an automaton for which the $\exists\boxplus$ language coincides with the given semilinear set.
    
    Consider an automaton $\cA$. By \cref{lem:JBA to semilinear,lem:oblivious}, we can write $\parikh(\jL(\cA))=\bigcup_{\mask\in\maskSet}S_\mask +\mask$ in canonical oblivious form. Similarly to \cref{lem:semilinear to JBA}, consider a specific linear set $L=\lin(\vec{b},P\cup E_\mask)$ in the union above, and consider the words $w_{\vec{b}}\in \Sigma^*$ and  $w_{\vec{p}}\in \Sigma^*$ for $\vec{p}\in P$ as in \cref{lem:semilinear to JBA}. 
    Let $\cN'$ be a nondeterministic automaton such that $\finL(\cN')=w_{\vec{b}}\cdot (w_{\vec{p_1}}+\ldots+w_{\vec{p_n}})^*$. We now obtain an automaton $\cN$ by connecting every accepting state of $\cN'$ to a new nondeterministic automaton $\cN_\mask$ which, as a B\"uchi automaton, accepts the language $L_\mask=\{w\in \mask|_\infty^\omega \mid \text{ every }\sigma\in \mask|_\infty^\omega\text{occurs infinitely often in }w\}$. The accepting states of $\cN$ are those of $\cN_\mask$.
    
    Observe that $L_\mask$ is permutation invariant, and in particular $\ewinL(\cN_\mask)=L_\mask$. Indeed, trivially we have $L_\mask=\regL(\cN_\mask)\subseteq \ewinL(\cN_\mask)$, and for the other direction -- if $w\in \ewinL(\cN_\mask)$, then $w\simkw w'$ for some $w'\in \regL(\cN_\mask)=L_\mask$ and some $k\in \bbN$, but since $L_\mask$ is permutation invariant, and since we have in particular that $w\sim w'$, then $w\in L_\mask$.
    
    Furthermore, $\parikh(\regL(\cN))=\lin(\vec{b},P\cup E_\mask)+\mask$.
    We construct $\cB$ by taking the union over $\mask$ in the masked semilinear set.

    We claim that $\ewinL(\cB)=\jL(\cA)$. 
    For the ``easy'' direction, let $w\in \ewinL(\cB)$, then in particular $\parikh(w)\in \parikh(\regL(\cN))=\lin(\vec{b},P\cup E_\mask)\subseteq S=\parikh(\jL(\cA))$ (for some set in the union), so $w\in \jL(\cA)$ (as $\jL(\cA)$ is permutation invariant).

    For the ``hard'' direction, assume $w\in \jL(\cA)$, then $\parikh(w)\in \lin(\vec{b},P\cup E_\mask)+\mask$ for some set in the union. We construct a word $w'\simkw w$ for some $k\in \bbN$ such that $w'\in \regL(\cB)$, thus showing $w\in \ewinL(\cB)$. Let $k\in \bbN$ be large enough so that $w=u\cdot z$ with $z\in \mask|_\infty^\omega$ and $|u|\le k$. There exists such $k$ since $w$ contains only finitely many letters outside $\mask|_\infty$. 
    Recall that we assumed our sets are $\mask$ oblivious, and therefore $\parikh(u)\in \lin(\vec{b},P\cup E_\mask)$. Moreover, $\parikh(z)=\mask$, since $z\in L_\mask$. In particular, we can find a permutation $u'$ of $u$ such that $u'\in \regL(\cN)$. Denote $w'=u'z$, then $w'$ satisfies the conditions above, so $w\in \ewinL(\cB)$ and we are done. \qed
\end{proof}


Next, we show that jumping languages are strictly less expressive that $\exists\boxplus$ languages.
\begin{example}
    \label{xmp:ewin_vs_JBA}
    Consider the automaton $\cA$ in \cref{fig:intersection_NBW_fails}. We claim that $\ewinL(\cA)$ is not the jumping language of any automaton. Indeed, it suffices to show that $\ewinL(\cA)$ is not permutation invariant. Trivially, $(ab)^\omega \in \ewinL(\cA)$. However, note that $(ab)^\omega \sim (a^nb^n)_{n=1}^\infty$ (i.e., the word $aba^2b^2a^3b^3\cdots $), since their Parikh images are both $(\infty,\infty)$. The latter, however, is not in $\ewinL(\cA)$, since any $k$-window permutation of $(a^nb^n)_{n=1}^\infty$ will eventually reach windows of the form $a^k$ (inside a long enough sequence $a^n$), so any permutation of it will cause $\cA$ to reject.
    \qed 
\end{example}

We finally show that $\omega$-regular languages are incomparable with $\exists \boxplus$ languages:
\begin{example}
    \label{xmp:ewin_no_NBW}
    Recall the automaton $\cA$ in \cref{fig:JBA_no_equiv_NBW}, with $\regL(\cA)=(ab)^*\cdot c^\omega$. Observe that $\ewinL(\cA)=\jL(\cA)=\{u\cdot c^\omega\mid u\in \{a,b,c\}^* \wedge \numl{u}{a}=\numl{u}{b}\}$. Indeed, if $w=u\cdot c^\omega$ as described, then $w\in \kwinL[|u|](\cA)$ by rearranging the equal number of $a$'s and $b$'s to $(ab)^*\cdot c^\omega$. Conversely, if $w\in \ewinL(\cA)$ then in particular $w\in \jL(\cA)$ (indeed, every $\exists\boxplus$-permutation is also a permutation). Thus, $\ewinL(\cA)$ is not $\omega$-regular.

    For the converse (i.e., an $\omega$-regular language that is not $\exists\boxplus$), we argue that every $\ewinL(\cA)$ is $k$-window invariant for every $k$, and since e.g., $\{(ab)^\omega\}$ is not 2-window invariant, then it is not recognizable as an $\exists\boxplus$ language.

    Indeed, consider $w\in \ewinL(\cA)$, and let $w_1\simkw w$ for some $k$. Then there exists $w_2 \simkw[k']w$ such that $w_2\in \regL(\cA)$ and $k'\in \bbN$. But then $w_2\simkw[k k'] w$ and $w_1\simkw[kk'] w$, so by transitivity $w_1\simkw[kk']w_2$, and so $w_2\in \ewinL(\cA)$. Thus, $\ewinL(\cA)$ is $k$-window invariant.
    \qed
\end{example}

\begin{remark}[Alternative Semantics]
\label{rmk:alternative_exists_semantics}
    Note that instead of the $\exists\boxplus$ semantics, we could require that there exists a partition of $w$ into windows of size \emph{at most} $k$ for some $k\in \bbN$. All our proofs carry out under this definition as well. 
\end{remark}

\section{Future Research}
\label{sec:future}
We have introduced three semantics for jumping automata over infinite words.
For the existing definitions, the class of $\exists\boxplus$ semantics is yet to be explored for closure properties and decision problems.

In a broader view, it would be interesting to consider quantitative semantics for jumping -- instead of the coarse separation between letters that occur finitely and infinitely often, one could envision a semantics that takes into account e.g., the frequency with which letters occur. 
Alternatively, one could return to the view of a ``jumping reading head'', and place constraints over the strategy used by the automaton to move the head (e.g., restrict to strategies implemented by one-counter machines).



%
%
%
\bibliographystyle{splncs04}
\bibliography{main}

\begin{thebibliography}{10}
\providecommand{\url}[1]{\texttt{#1}}
\providecommand{\urlprefix}{URL }
\providecommand{\doi}[1]{https://doi.org/#1}

\bibitem{nassar2022simulation}
Abu~Nassar, A., Almagor, S.: Simulation by rounds of letter-to-letter
  transducers. In: 30th EACSL Annual Conference on Computer Science Logic
  (2022)

\bibitem{almagor2020process}
Almagor, S.: Process symmetry in probabilistic transducers. In: 40th IARCS
  Annual Conference on Foundations of Software Technology and Theoretical
  Computer Science (2020)

\bibitem{beierdescriptional}
Beier, S., Holzer, M., Kutrib, M.: On the descriptional complexity of
  operations on semilinear sets. EPTCS 252 p.~41 (2017)

\bibitem{buchi1966symposium}
B{\"u}chi, J.R.: Symposium on decision problems: On a decision method in
  restricted second order arithmetic. In: Studies in Logic and the Foundations
  of Mathematics, vol.~44, pp. 1--11. Elsevier (1966)

\bibitem{cadilhac2012affine}
Cadilhac, M., Finkel, A., McKenzie, P.: Affine parikh automata.
  RAIRO-Theoretical Informatics and Applications  \textbf{46}(4),  511--545
  (2012)

\bibitem{cadilhac2012bounded}
Cadilhac, M., Finkel, A., McKenzie, P.: Bounded parikh automata. International
  Journal of Foundations of Computer Science  \textbf{23}(08),  1691--1709
  (2012)

\bibitem{chistikov2016taming}
Chistikov, D., Haase, C.: The taming of the semi-linear set. In: 43rd
  International Colloquium on Automata, Languages, and Programming (ICALP
  2016). Schloss Dagstuhl-Leibniz-Zentrum fuer Informatik (2016)

\bibitem{emerson1985modalities}
Emerson, E.A., Lei, C.L.: Modalities for model checking (extended abstract)
  branching time strikes back. In: Proceedings of the 12th ACM SIGACT-SIGPLAN
  symposium on Principles of programming languages. pp. 84--96 (1985)

\bibitem{fazekas2021two}
Fazekas, S.Z., Hoshi, K., Yamamura, A.: Two-way deterministic automata with
  jumping mode. Theoretical Computer Science  \textbf{864},  92--102 (2021)

\bibitem{fernau2015jumping}
Fernau, H., Paramasivan, M., Schmid, M.L.: Jumping finite automata:
  characterizations and complexity. In: International Conference on
  Implementation and Application of Automata. pp. 89--101. Springer (2015)

\bibitem{fernau2017characterization}
Fernau, H., Paramasivan, M., Schmid, M.L., Vorel, V.: Characterization and
  complexity results on jumping finite automata. Theoretical Computer Science
  \textbf{679},  31--52 (2017)

\bibitem{guha2022parikh}
Guha, S., Jecker, I., Lehtinen, K., Zimmermann, M.: Parikh automata over
  infinite words. In: 42nd IARCS Annual Conference on Foundations of Software
  Technology and Theoretical Computer Science (2022)

\bibitem{klaedtke2003monadic}
Klaedtke, F., Rue{\ss}, H.: Monadic second-order logics with cardinalities. In:
  Automata, Languages and Programming: 30th International Colloquium, ICALP
  2003 Eindhoven, The Netherlands, June 30--July 4, 2003 Proceedings 30. pp.
  681--696. Springer (2003)

\bibitem{kopczynski2015complexity}
Kopczynski, E.: Complexity of problems of commutative grammars. Logical Methods
  in Computer Science  \textbf{11} (2015)

\bibitem{kopczynski2010parikh}
Kopczynski, E., To, A.W.: Parikh images of grammars: Complexity and
  applications. In: 2010 25th Annual IEEE Symposium on Logic in Computer
  Science. pp. 80--89. IEEE (2010)

\bibitem{kupferman2018automata}
Kupferman, O.: Automata theory and model checking. Handbook of model checking
  pp. 107--151 (2018)

\bibitem{landweber1968decision}
Landweber, L.H.: Decision problems for w-automata. Tech. rep., University of
  Wisconsin-Madison Department of Computer Sciences (1968)

\bibitem{lavado2014operational}
Lavado, G.J., Pighizzini, G., Seki, S.: Operational state complexity under
  parikh equivalence. In: Descriptional Complexity of Formal Systems: 16th
  International Workshop, DCFS 2014, Turku, Finland, August 5-8, 2014.
  Proceedings 16. pp. 294--305. Springer (2014)

\bibitem{mcnaughton1966testing}
McNaughton, R.: Testing and generating infinite sequences by a finite
  automaton. Information and control  \textbf{9}(5),  521--530 (1966)

\bibitem{meduna2012jumping}
Meduna, A., Zemek, P.: Jumping finite automata. International Journal of
  Foundations of Computer Science  \textbf{23}(07),  1555--1578 (2012)

\bibitem{parikh1966context}
Parikh, R.J.: On context-free languages. Journal of the ACM (JACM)
  \textbf{13}(4),  570--581 (1966)

\bibitem{safra1988complexity}
Safra, S.: On the complexity of $\omega$-automata. In: Proc. 29th IEEE Symp.
  Found. of Comp. Sci. pp. 319--327 (1988)

\bibitem{to2010parikh}
To, A.W.: Parikh images of regular languages: Complexity and applications.
  arXiv preprint arXiv:1002.1464  (2010)

\bibitem{vorel2018basic}
Vorel, V.: On basic properties of jumping finite automata. International
  Journal of Foundations of Computer Science  \textbf{29}(01),  1--15 (2018)

\end{thebibliography}

\appendix

\section{Proofs}
\label{apx:proofs}

\subsection{$\omega$-regular languages as union}
\label{apx:omega_regular_union}
The following claim is a folklore simple exercise, and we bring it for completeness.
\begin{claim}
Consider an automaton $\cA=\tup{Q,\Sigma,\delta,q_0,\alpha}$, then we can write $\regL(\cA)=\bigcup_{i=1}^k S_i\cdot T_i^\omega$ where the $S_i$ and $T_i$ are regular languages.
\end{claim}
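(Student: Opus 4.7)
The plan is to decompose $\regL(\cA)$ according to which accepting state is visited infinitely often in an accepting run. For each $q \in \alpha$, I would define $S_q \subseteq \Sigma^*$ as the set of finite words that label some path in $\cA$ from $q_0$ to $q$, and $T_q \subseteq \Sigma^*$ as the set of \emph{nonempty} finite words that label some path in $\cA$ from $q$ back to $q$. Both $S_q$ and $T_q$ are regular, since each is the language of the NFA obtained from $\cA$ by appropriately redesignating initial/accepting states (and, for $T_q$, restricting to nonempty paths). The claim will then be that
\[
\regL(\cA) = \bigcup_{q \in \alpha} S_q \cdot T_q^\omega,
\]
which is a union of the required form (with $k = |\alpha|$).

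For the containment $\regL(\cA) \subseteq \bigcup_{q \in \alpha} S_q \cdot T_q^\omega$, I would take an accepted word $w$ with an accepting run $\rho$, pick some $q \in \alpha$ that $\rho$ visits infinitely often, and let $i_0 < i_1 < i_2 < \cdots$ be the successive indices at which $\rho$ is at $q$. The prefix of $w$ of length $i_0$ lies in $S_q$ by construction, and each infix $w_{i_j+1}\cdots w_{i_{j+1}}$ (which is nonempty since $i_{j+1} > i_j$) lies in $T_q$. Concatenating gives the desired decomposition $w \in S_q \cdot T_q^\omega$.

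For the converse containment, I would take $w = u \cdot v_1 v_2 v_3 \cdots$ with $u \in S_q$ and each $v_j \in T_q$. Fix a finite path in $\cA$ from $q_0$ to $q$ labeled $u$, and for each $j$ fix a finite path from $q$ to $q$ labeled $v_j$ (each of length at least one, since $v_j$ is nonempty). Stitching these together produces an infinite run of $\cA$ on $w$ that visits $q \in \alpha$ infinitely often, hence is accepting.

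This is entirely routine; there is no substantive obstacle. The only small care required is to enforce that $T_q$ consists of \emph{nonempty} words, so that $T_q^\omega$ indeed corresponds to infinitely many returns to $q$ in the run (rather than the degenerate case where an infinite product of empty words would not yield an infinite word). Regularity of $S_q$ and $T_q$ is immediate from standard NFA constructions, so the decomposition is effective and of polynomial size in $\cA$, which is what the complexity analysis referenced elsewhere in the paper implicitly relies upon.
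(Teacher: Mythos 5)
Your proposal is correct and is essentially the paper's own proof: the paper likewise decomposes $\regL(\cA)$ as $\bigcup_{q\in\alpha}\finL(\cA_{q_0}^q)\cdot\finL(\cA_q^q)^\omega$, indexing by the accepting state visited infinitely often. Your explicit insistence that $T_q$ contain only nonempty words is a small point of care the paper leaves implicit, but it does not change the argument.
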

\begin{proof}
    For states $p,q\in Q$, denote $\cA_p^q$ the automaton $\cA$ with initial state $p$ and accepting states $\{q\}$. Observe that $\cA$ accepts a word if there is a run of $\cA$ that starts in $q_0$, and visits some state $q\in \alpha$ infinitely often (indeed, since $Q$ is finite, then if $\alpha$ is visited infinitely often, so is some state in $\alpha$).

    We thus have $\regL(\cA)=\bigcup_{q\in \alpha}\finL(\cA_{q_0}^q)\cdot \finL(\cA_q^q)^\omega$.
    \qed
\end{proof}

\subsection{Proof of \cref{thm:WJW to NBW}}
\label{apx:WJW to NBW}
\begin{proof}
Intuitively, we construct $\cB_k$  so that it stores in a state the Parikh image of $k$ letters, then nondeterministically simulates $\cA$ on all words with the stored Parikh image, while noting when an accepting state may have been traversed. We illustrate the construction for $k=2$ in \cref{fig:example_WJW_to_NBW}. 

We now formalize this intuition.

    Let $\cA=\tup{Q,\Sigma,\delta,q_0,\alpha}$. We construct $\cB_k=\tup{Q',\Sigma,\delta',q_0',\alpha'}$ as follows.
    \begin{itemize}
        \item Let $Q_\alpha = \{q_\alpha\ | q\in Q\}$ be a copy of $Q$, and $P = \{\parikh(x) | x\in\Sigma^{<k}\}\subseteq \{0,\ldots,k-1\}^{|\Sigma|}$ be the set of Parikh images of all words of length $<k$ (note that $P$ is finite), then the states of $\cB_k$ are $Q'=(Q\times P) \cup Q_\alpha$.
        \item The initial state is $q_0'=(q_0,\vec{0})$ (Note that $\parikh(\epsilon)=\vec{0}$).
        \item $\alpha'=Q_\alpha$.
        \item The transition function is defined as follows. For a state $(q,\vec{v})$ and letter $\sigma\in \Sigma$ let $\vec{u}=\vec{v}+\parikh(\sigma)$. If $\sum_{\sigma\in \Sigma}\vec{u}(\sigma)<k$ we define $\delta((q,\vec{v}),\sigma)=(q,\vec{u})$. 
        If $\sum_{\sigma\in \Sigma}\vec{u}(\sigma)=k$, define $\delta^*(q,\vec{u})$ to be the set of states reachable from $q$ upon reading a word $x$ with $\parikh(x)=\vec{u}$, and define $\delta_\alpha^*(q,\vec{u})$ to be the states reachable from $q$ upon reading a word $x$ with $\parikh(x)=\vec{u}$ via a run that visits $\alpha$. We then have
        \[
        \delta'((q,\vec{v}),\sigma)=\{(p,\vec{0})\mid p\in \delta^*(q,\vec{v})\}\cup \{p_\alpha\mid p\in \delta_\alpha^*(q,\vec{v})\}
        \]
        Finally, for every $q\in Q_\alpha$ we define $\delta'(q,\sigma)=\delta'((q,\vec{0}),\sigma)$ (that is, the $Q_\alpha$ copy behaves identically to the $\vec{0}$ copy of $Q$).
    \end{itemize}

    We now claim that $\kwinL(\cA)=\regL(\cB_k)$:
    Let $w\in\kwinL(\cA)$ and write $w=x_1x_2x_3...$ where $|x_i|=k$ for all $i$. Then, there is $w'=y_1y_2y_3... \simkw w$ (with all $|y_i|=k$) such that $\cA$ has an accepting run $\rho$ on $w'$. Denote by $q_0,q_1,\ldots$ the states where for $i\ge 1$ we have $q_i=\rho_{ik}$. That is, the states that $\rho$ visits after reading each $y_i$.
    Observe that for all $i$ we have that $q_{k(i+1)}\in\delta^*(q_{ik},\psi(y_{i+1}))=\delta^*(q_{ik},\psi(x_{i+1}))$. Thus, there is a run of $\cB$ that upon reading $x_{i+1}$ reaches the state $(q_{(i+1)k},\vec{0})$ if $\rho$ does not go through an accepting state on $y_{i+1}$ or $(q_{(i+1)k})_\alpha$ if it does. Since $\rho$ traverses infinitely many accepting states, so does the run of $\cB_k$ on $w$, and thus $w\in \regL(\cB_k)$.
    
    Conversely, assume $w\in\regL(\cB_k)$ and again write $w=x_1x_2x_3...$ with $|x_i|=k$. Then there is an accepting run of $\cB_k$ on $w$. By viewing the run in consecutive $k$-windows, we can construct words $y_1,y_2,\ldots$ such that $y_i\sim x_i$ for all $i$ and such that $\cA$ accepts $w'=y_1\cdot y_2\cdots$. Indeed, in every $k$-window, $\cB$ simulates the run of $\cA$ on some permutation of the read word. We conclude that $w\in \kwinL(\cA)$.
    \qed 
\end{proof}

\end{document}